\documentclass[11pt,usenames,dvipsnames]{article}

\usepackage[dvipsnames]{xcolor}
\usepackage{pgfplots}
\usepgfplotslibrary{units}
\usepackage{graphicx}
\usepackage{tikz}
\usetikzlibrary{matrix}
\usetikzlibrary{cd}
\usepackage{amsthm}
\usepackage{mathtools}   
\usepackage{amsfonts}
\usepackage{amsmath}
\usepackage[title]{appendix}

\usepackage{latexsym}

\usepackage{amsfonts,amssymb} 
\usepackage{cancel}


\setlength{\textwidth}{16.5cm}
\setlength{\textheight}{24cm}
\setlength{\oddsidemargin}{10pt}
\setlength{\evensidemargin}{10pt}

\newtheorem{theorem}{Theorem}[section]
\newtheorem{lemma}[theorem]{Lemma}
\newtheorem{proposition}[theorem]{Proposition}
\newtheorem{remark}[theorem]{Remark}
\theoremstyle{definition}

\theoremstyle{remark}

\usepackage{xcolor}
\usepackage{xparse}
\usepackage{tikz}
\usetikzlibrary{matrix}
\usetikzlibrary{cd}

\definecolor{amber}{rgb}{1.0, 0.49, 0.0}
\definecolor{amethyst}{rgb}{0.45, 0.31, 0.59}
\definecolor{applegreen}{rgb}{0.55, 0.71, 0.0}
\definecolor{asparagus}{rgb}{0.0, 0.42, 0.24}
\definecolor{darkbyzantium}{rgb}{0.36, 0.22, 0.33}
\definecolor{darklavender}{rgb}{0.45, 0.31, 0.59}

\def\hybrid{\topmargin 0pt      \oddsidemargin 0pt
        \headheight 0pt \headsep 0pt
        \voffset=-0.5cm
        \textwidth 6.25in       
        \textheight 9.5in       
        \marginparwidth 0.0in
        \parskip 5pt plus 1pt   \jot = 1.5ex}
\catcode`\@=11
\def\marginnote#1{}

\newcount\hour
\newcount\minute
\newtoks\amorpm
\hour=\time\divide\hour by60
\minute=\time{\multiply\hour by60 \global\advance\minute by-\hour}
\edef\standardtime{{\ifnum\hour<12 \global\amorpm={am}%
        \else\global\amorpm={pm}\advance\hour by-12 \fi
        \ifnum\hour=0 \hour=12 \fi
        \number\hour:\ifnum\minute<10 0\fi\number\minute\the\amorpm}}
\edef\militarytime{\number\hour:\ifnum\minute<10 0\fi\number\minute}

\def\draftlabel#1{{\@bsphack\if@filesw {\let\thepage\relax
   \xdef\@gtempa{\write\@auxout{\string
      \newlabel{#1}{{\@currentlabel}{\thepage}}}}}\@gtempa
   \if@nobreak \ifvmode\nobreak\fi\fi\fi\@esphack}
        \gdef\@eqnlabel{#1}}
\def\@eqnlabel{}
\def\@vacuum{}
\def\draftmarginnote#1{\marginpar{\raggedright\scriptsize\tt#1}}
\def\draftlabel#1{{\@bsphack\if@filesw {\let\thepage\relax
   \xdef\@gtempa{\write\@auxout{\string
      \newlabel{#1}{{\@currentlabel}{\thepage}}}}}\@gtempa
   \if@nobreak \ifvmode\nobreak\fi\fi\fi\@esphack}
        \gdef\@eqnlabel{#1}}
\def\@eqnlabel{}
\def\@vacuum{}
\def\draftmarginnote#1{\marginpar{\raggedright\scriptsize\tt#1}}

\def\draft{\oddsidemargin -.5truein
        \def\@oddfoot{\sl preliminary draft \hfil
        \rm\thepage\hfil\sl\today\quad\militarytime}
        \let\@evenfoot\@oddfoot \overfullrule 3pt
        \let\label=\draftlabel
        \let\marginnote=\draftmarginnote
   \def\@eqnnum{(\theequation)\rlap{\kern\marginparsep\tt\@eqnlabel}%
\global\let\@eqnlabel\@vacuum}  }


\def\numberbysection{\@addtoreset{equation}{section}
        \def\theequation{\thesection.\arabic{equation}}}

\def\underline#1{\relax\ifmmode\@@underline#1\else
        $\@@underline{\hbox{#1}}$\relax\fi}

\def\titlepage{\@restonecolfalse\if@twocolumn\@restonecoltrue\onecolumn
     \else \newpage \fi \thispagestyle{empty}\c@page\z@
        \def\thefootnote{\fnsymbol{footnote}} }

\def\endtitlepage{\if@restonecol\twocolumn \else  \fi
        \def\thefootnote{\arabic{footnote}}
        \setcounter{footnote}{0}}  
\relax


\numberbysection
\hybrid

\newfont{\Bbbb}{msbm7 scaled 1\@ptsize00}

\newcommand{\DDD}{\raise-1pt\hbox{$\mbox{\Bbbb D}$}}



\newcommand{\UUU}{\raise-1pt\hbox{$\mbox{\Bbbb U}$}}

\newcommand{\z}{\raise-1pt\hbox{$\mbox{\Bbbb Z}$}}



\def\beq{\begin{equation}}
\def\eeq{\end{equation}}

\theoremstyle{remark}

\usepackage{amsmath}

\begin{document}

\begin{titlepage}

\setcounter{page}{1}




\title{Dualities for rational multi-particle Painlev\'e systems: Spectral versus Ruijsenaars.}
\author{Ilia Gaiur\thanks{School of Mathematics, University of Birmingham, UK. Email: iygayur@gmail.com} \and and \and Vladimir Rubtsov\thanks{LAREMA, UMR 6093 du CNRS, D\'epartment de Math\'ematics , Universit\'e d'Angers, France,} \,\thanks{ITEP, Theory Division, Moscow, Russia,} \,\thanks{Institie of Geometry and Physics (IGAP), Trieste, Italy. Email: volodya@univ-angers.fr.}}

\maketitle

\maketitle
\vspace{-7cm} \vspace{6.5cm}

\begin{abstract}
The extension of the Painlev\'e-Calogero coorespondence for $n$-particle Inozemtsev systems raises to the multi-particle generalisations of the Painlev\'e equations which may be obtained by the procedure of Hamiltonian reduction applied to the matrix  or non-commutative Painlev\'e systems, which also gives isomonodromic formulation for these non-autonomous Hamiltonian systems.  We provide here dual systems for the rational multi-particle Painlev\'e systems ($P_{\operatorname{I}},P_{\operatorname{II}}$ and $P_{\operatorname{IV}}$) by reduction from another intersection a coadjoint orbit of $GL(n)$ action with the level set of moment map. We describe this duality in terms of  the spectral curve of  non-reduced system in comparison to the Ruijsenaars duality.
\end{abstract}


\end{titlepage}

\section{Introduction}

The story of a {\it duality phenomenon} in the realm of multi-particle systems goes back to Simon Ruijsenaars ideas appeared 30 years ago. He came up with the following question:
to construct action-angle variables for both Calogero-Moser models $A_n-$type models as well as their "relativistic" of difference analogues (known now as Ruijsenaars-Scheider systems).
This story is well-documented (see e.g. \cite{GR}). 
The main tool in his approach is a commutation relation for the corresponding Lax 
matrix and some other explicit matrix function both exhibited in the phase-space variables. 
Diagonalising the initial Lax matrix, he has discovered that the auxiliary matrix can be interpreted as a Lax matrix 
of another multi-particle system whose position coordinates  are given by the eigenvalues of the
initial  Lax matrix, in other words by the {\it action variables} of the first model. This duality
can be described by the following statement:  {\it the action variables of
the first system are the position variables of the second, and vice versa}.  
The simplest example of this  {\it Ruijsenaars duality} is exhibited in the self-duality phenomenon for the rational Calogero model. 
The self-duality of this model admits an easy geometric interpretation
in terms of the Hamiltonian reduction of Kazhdan, Kostant and Sternberg (see \cite{Kazdan}). We shall review and remind their approach below in an appropriate for our aims form.

The basic idea for other Calogero-Moser-Ruijsenaars-Scheider models comes also from a symplectic reduction. Starting  with a "big phase space" (which has basically a Lie algebra or a Lie group) origin we have deal with two commuting families of "free canonical Hamiltonians".
After a suitable reduction one can construct two "natural" models on the reduced phase space.
"Free" Hamiltonians transform in non-trivial many- body Hamiltonians and in variables which correspond to particle positions and their roles are interchanged  in terms of both models. The natural map between the two models of the reduced phase space yields the "duality morphism"

Further Gorsky and Nekrasov have developed the ideas of  \cite{Kazdan} using also an infinite-dimensional setting for the reduction procedure
starting with {\it integrable sectors} of Topological Quantum Fields - $N=2$ Yang-Mills for Calogero–Moser and GWZW for Ruijsenaars-Schneider models (see e.g. \cite{GN}). 

The duality ideas  were formulated {\it per se} in the Integrable Systems realm in \cite{FGNR}. They had specified the Ruijsenaars duality as the {\it Action-Coordinate (AC) duality} (see the above description)
and had proposed numerous examples. They focused their attention on infinite-dimensional phase spaces, on holomorphic setting and (in the first time) they extend the
phase space symmetry, considering the Poisson reduction for Poisson-Lie structures and related {\it Heisenberg double} of Fock-Rosly (see for the excellent presentation \cite{Aud}).

We should remark here that L.Feher with various co-authors put this ideas on rigorous and explicit platform (see \cite{Feh-Klim-1, Feh-Klim-2, Feh-Mar} and \cite{Feh}). His  students  continue to investigate various aspects of the Ruijsenaars duality (see e.g. \cite{Gorbe}).

The authors \cite{FGNR}, apart of the AC-duality definition, had introduced another important duality principle - so called {\it Action-Action (AA) duality}.  This duality is transparent in the case of 
Seiberg-Witten theory and maps the integrable system in itself. Locally this map changes the action variable $I \to I_{dual}$ via the canonical transformation generating function $S$ asscociated with the Lagrangian submanifold $\bold I_{dual} = \frac{\partial S}{\partial \bold I}$ and, vice-versa, $\bold I = \frac{\partial S_{dual}}{\partial \bold I_{dual}}$ for the dual canonical transformation generating function $S_{dual}$.

We make an attempt to extend and to compare both AC and AA dualities in the framework of {\it non-autonomous Hamiltonian systems} taking as an example a class of systems associated with confluented
Painlev\'e transcendents. More precisely,  the other key ingredient of our work is a famous class of Painlev\'e equations and their matrix and non-commutative analogues.

A non-commutative Painlev\'e II equation was firstly appeared in the work of second author and Retakh \cite{RetakhRubtsov}. This abstract non-commutative equation generalizes
{\it matrix Painlev\'e II equation} proposed in \cite{BalandinSokolov} to the case of {\it non-commutative Painlev\'e time} and have (via a non-abelian Toda chain) analogue of "rational" solutions expressed in
Hankel {\it quasi-determinants}, admit isomonodromy-like presentation and were used in a non-commutative version of the Riemann-Hilbert problem \cite{BertolaCafasso1}.

The multi-particle non-autonomous Hamiltonian systems, which generalise Painlev\'e equations from another point of view, were introduced (for Painlev\'e VI) by Manin \cite{Man} and Levin-Olshanetsky \cite{LO00}.

Manin has described a configuration space for Painlev\'e VI as a fibration (a pencil of elliptic curves) $\pi: \mathcal E \to B$  and its solutions as (multivalued) sections of the fibration. He has given an  interpretation of  the correspondent  Painlev\'e VI phase space as a (twisted) sheaf of the relative holomorphic one-forms on $\mathcal E.$ This description gives the identification of the Painlev\'e VI moduli space with an affine space of certain special closed 2-forms on $\mathcal E$. If such a form $\Omega$ from this space corresponds to the Painlev\'e VI equation then the corresponding solutions are the leaves of the lagrangian fibration of $\Omega$.  Manin has proposed  a time-dependent Hamiltonian description of Painlev\'e VI as

$$\frac{dq}{d\tau}= \frac{\partial H}{\partial p},\quad \frac{dp}{d\tau}= - \frac{\partial H}{\partial q},$$
where

$$H=\frac{p^2}{2} - \frac{1}{(2\pi)^2}\sum_{k=0}^3\alpha_k \wp(q +T_k/2,\tau).$$

Here  $T_k$ are points of order two and $\wp$ denotes the Weierstrass $\wp-$function. Levin an Olshantetsky have given a "many-particle"generalization of this Manin description in the framework of their approach to non-autonomous version of Hitchin integrable systems.

For other Painlev\'e transcendents  Takasaki \cite{Takasaki1} has computed (by the confluence procedure) the multi-particle {\it Calogero-Painlev\'e Hamiltonians:}

\begin{equation*}
 \begin{aligned} 
\tilde{H}_{V I} : \quad & \sum_{j=1}^{n}\left(\frac{p_{j}^{2}}{2}+\sum_{\ell=0}^{3} g_{\ell}^{2} \wp\left(q_{j}+\omega_{\ell}\right)\right)+g^{2} \sum_{j \neq k}\left(\wp\left(q_{j}-q_{k}\right)+\wp\left(q_{j}+q_{k}\right)\right) \\ \tilde{H}_{V} :  \quad & \sum_{j=1}^{n}\left(\frac{p_{j}^{2}}{2}-\frac{\alpha}{\sinh ^{2}\left(q_{j} / 2\right)}-\frac{\beta}{\cosh ^{2}\left(q_{j} / 2\right)}+\frac{\gamma t}{2} \cosh \left(q_{j}\right)+\frac{\delta t^{2}}{8} \cosh \left(2 q_{j}\right)\right)+\\ & +g^{2} \sum_{j \neq k}\left(\frac{1}{\sinh ^{2}\left(\left(q_{j}-q_{k}\right) / 2\right)}+\frac{1}{\sinh ^{2}\left(\left(q_{j}+q_{k}\right) / 2\right)}\right) & \\
\tilde{H}_{I V} : \quad & \sum_{j=1}^{n}\left(\frac{p_{j}^{2}}{2}-\frac{1}{2}\left(\frac{q_{j}}{2}\right)^{6}-2 t\left(\frac{q_{j}}{2}\right)^{4}-2\left(t^{2}-\alpha\right)\left(\frac{q_{j}}{2}\right)^{2}+\beta\left(\frac{q_{j}}{2}\right)^{-2}\right)+ \\
&  g^{2} \sum_{j \neq k}\left(\frac{1}{\left(q_{j}-q_{k}\right)^{2}}+\frac{1}{\left(q_{j}+q_{k}\right)^{2}}\right)  
\\
\tilde{H}_{I I I} : \quad & \sum_{j=1}^{n}\left(\frac{p_{j}^{2}}{2}-\frac{\alpha}{4} \mathrm{e}^{q_{j}}+\frac{\beta t}{4} \mathrm{e}^{-q_{j}}-\frac{\gamma}{8} \mathrm{e}^{2 q_{j}}+\frac{\delta t^{2}}{8} \mathrm{e}^{-2 q_{j}}\right)+g^{2} \sum_{j \neq k} \frac{1}{\sinh ^{2}\left(\left(q_{j}-q_{k}\right) / 2\right)} 
\\ 
\tilde{H}_{I I} : \quad & \sum_{j=1}^{n}\left(\frac{p_{j}^{2}}{2}-\frac{1}{2}\left(q_{j}^{2}+\frac{t}{2}\right)^{2}-\alpha q_{j}\right)+g^{2} \sum_{j \neq k} \frac{1}{\left(q_{j}-q_{k}\right)^{2}} \\
\tilde{H}_{I} : \quad & \sum_{j=1}^{n}\left(\frac{p_{j}^{2}}{2}-2 q_{j}^{3}-t q_{j}\right)+g^{2} \sum_{j \neq k} \frac{1}{\left(q_{j}-q_{k}\right)^{2}}
 \end{aligned} 
\end{equation*}

Some of these Hamiltonian systems may be viewed as non-autonomous (deformed) versions of Inozemtsev systems (see \cite{Ino}) , which had appeared as a multi-particle version of so-called  Painlev\'e-Calogero correspondence \cite{LO00}. $\tilde{H}_{\operatorname{VI}}$ assigns to elliptic Inozemtsev system, $\tilde{H}_{\operatorname{V}}$ to trigonometric and $\tilde{H}_{\operatorname{IV}}$ to rational. There are also two more Hamiltonians which have rational interaction potential - multi-particle Painlev\'e I and II. Since that we introduce rational multi-particle Painlev\'e systems as $P_{\operatorname{I}},P_{\operatorname{II}}$ and $P_{\operatorname{IV}}$ multi-particle systems. In present work we deal only with rational multi-particle Painlev\'e models.

Recently, the second author (with M.Bertola and M.Cafasso) has provided a scheme which brings the connection between matrix Painlev\'e equations and Takasaki's Hamiltonians \cite{BertolaCafassoRubtsov}. They showed that these multi-particle generalisation of the Painlev\'e systems may be obtained from the matrix Painlev\'e Hamiltonian systems producing the Hamiltonian reduction procedure \`a la Kazhdan, Konstant and Sternberg \cite{Kazdan} and applying some symplectic map for reduced system. Authors also gave an isomonodromic formulation of multi-particle Painlev\'e  using this reduction which answered Takasaki's question about the existence of zero-curvature representation for these Hamiltonians.



Here we do quick review the procedure of reduction for the matrix Painlev\'e equations, we refer to the original paper for details and also \cite{BaBeTa03} for the basic facts of Hamiltonian group actions. We start from the $2n\times 2n$ of the isomonodromic problem for matrix Painlev\'e equations (see \cite{BertolaCafasso1, Kawakami1, Kawakami2})
\begin{equation}\label{0.1}
  \left\{\begin{array}{l} 
    \frac{\partial}{\partial z} \Phi = \mathcal{A}(\mathbf{q},\mathbf{p}, z, t)\Phi \\ \\
  \frac{\partial}{\partial t} \Phi = \mathcal{B}(\mathbf{q},\mathbf{p}, z, t)\Phi
                                                    \end{array}\right.
\end{equation}
where $\mathbf{q}$ and $\mathbf{p}$ are unknown elements of $\mathfrak{gl}(n)$. The compatibility condition
\begin{displaymath}
 \mathcal{A}_t -  \mathcal{B}_z + [ \mathcal{A}, \mathcal{B}] = 0 
\end{displaymath}
gives non-autonomous Hamiltonian system with Hamiltonian $\operatorname{Tr} H(\mathbf{q}, \mathbf{p}, t)$ and symplectic form $\omega=\operatorname{Tr} \operatorname{d}\mathbf{p}\wedge \operatorname{d}\mathbf{q} = \sum\limits_{i,j} \operatorname{d}p_{ij}\wedge  \operatorname{d} q_{ji}$, where $H(q,p,t)$ is corresponding Hamiltonian from the Okamoto's list
\begin{equation*}
\begin{aligned}
    \mathrm{P}_{\mathrm{VI}}: \quad & H=\frac{q(q-1)(q-t)}{t(t-1)}\left[p^{2}-\left(\frac{\kappa_{0}}{q}+\frac{\kappa_{1}}{q-1}+\frac{\theta-1}{q-t}\right) p+\frac{\kappa}{q(q-1)}\right] \\
    \mathrm{P}_{\mathrm{V}} : \quad & H=\frac{q(q-1)^{2}}{t}\left[p^{2}-\left(\frac{\kappa_{0}}{q}+\frac{\theta_{1}}{q-1}-\frac{\eta_{1} t}{(q-1)^{2}}\right) p+\frac{\kappa}{q(q-1)}\right] \\
    \mathrm{P}_{\mathrm{IV}}: \quad & H=2 q\left[p^{2}-\left(\frac{q}{2}+t+\frac{\kappa_{0}}{q}\right) p+\frac{\theta_{\infty}}{2}\right] \\
    \mathrm{P}_{\mathrm{III}}: \quad & H=\frac{q^{2}}{t}\left[p^{2}-\left(\eta_{\infty}+\frac{\theta_{0}}{q}-\frac{\eta_{0} t}{q^{2}}\right) p+\frac{\eta_{\infty}\left(\theta_{0}+\theta_{\infty}\right)}{2 q}\right] \\
    \mathrm{P}_{\mathrm{II}}: \quad & H=\frac{p^{2}}{2}-\left(q^{2}+\frac{t}{2}\right) p-\left(\alpha+\frac{1}{2}\right) q \\
    \mathrm{P}_{\mathrm{I}} : \quad & H=\frac{p^{2}}{2}-2 q^{3}-t q
\end{aligned}
\end{equation*}
So the phase space of matrix Painlev\'e equations is given by 
\begin{equation}\label{symM}
M = (S,\omega), \quad S = \mathfrak{gl}(n)\times\mathfrak{gl}(n) \simeq \{\mathbf{q} \in \mathfrak{gl}(n),\mathbf{p} \in \mathfrak{gl}(n)\}, \quad \omega=\operatorname{Tr} \operatorname{d}\mathbf{p}\wedge \operatorname{d}\mathbf{q}
\end{equation} 
Symplectic manifold $M$ allows a group action of $Gl(n)$ by conjugation
$$
g \circ (\mathbf{q},\mathbf{p}) = (Ad_g\mathbf{q},Ad_g\mathbf{p}) = (g^{-1} \mathbf{q}g,g^{-1}\mathbf{p}g)
$$
which is Hamiltonian, with moment map given by
\begin{equation}\label{momentMap}
    \mathcal{M}(\mathbf{p},\mathbf{q}) = [\mathbf{p},\mathbf{q}].
\end{equation}
Since matrix Painlev\'e Hamiltonians are traces of the rational Okamoto Hamiltonians, they are invariant under $GL$-action and the moment map $\mathcal{M}$ is conserved quantity. Finally, restricting to the level set of momentum
\begin{equation}\label{0.2}
  [\mathbf{p},\mathbf{q}] = \sqrt{-1} g(\mathbf{1}-v^{T}\otimes v),\quad v=(1,\,1,\, \dots 1)
\end{equation}
diagonilizing coordinate $\mathbf{q}$, and resolving moment map for $\mathbf{p}$ we get multi-particle Hamiltonian systems which coincide up to canonical transformation with the Hamiltonians presented by Takasaki. The isomonodromic problem for reduced system comes from the isomonodromic problem for matrix Painlev\'e equations by the special gauge of the unreduced isomonodromic system.



The aim of this small paper to introduce the dual Hamiltonian systems for $P_{\operatorname{I}},\,P_{\operatorname{II}},\,P_{\operatorname{IV}}$. In this case the duality which raises from considering another point on the orbit of $GL(n)$ action. The existence of isomonodromic problem provides the spectral duality for systems obtained by reduction, which differs from Ruijsenaars (action-angle) duality.

For matrix $P_{\operatorname{II}}$ and $P_{\operatorname{I}}$, we also provide an autonomous version which reductions give integrable autonomizations of Takasaki Hamiltonians and may be viewed as further degeneration of rational Inozemtsev system. 

Let us briefly discuss the structure of paper. In Section 2 we remind the basic objects of our study - two types of dualities for integrable many-body systems and illustrate
both in the most simple and transparent cases. We interpret the spectral duality as a special case of the AA duality and formulate it in a form of the Theorem 2.1 (the results of this Theorem are probably well known as a folklore but we did not find them in the appropriate form.

Section 3 contains our main computational results. Here we obtain dual Hamiltonians for multi-partcle Calogero-Painlev\'e IV, II, I and discuss the self-duality property. It happens that the Painlev\'e IV model obeys the property of self-duality but other two rational multi-particle Calogero-Painlev\'e Hamiltonians do not.

In Section 4 we study an autonomous avatars of rational multi-particle Calogero-Painlev\'e systems and their behaviour under reduction. This is a subject of well-known classical {\it Calogero-Painlev\'e correspondence} of Levin-Olshanetsky-Zabrodin-Zotov. We explicitly write the Lax representation for two non-commutative Painlev\'e models. We close this section with a confluence procedure interpretation of Inozemtsev system degenerations and its symplectomorphic properties.

In Section 5 we observe a relation between two type of reduction for non-commutative integrable models (more precisely, for the matrix modified Korteweg-de Vries (MmKdV) equation, the spectral duality and the Calogero-Painlev\'e correspondence mappings. 

In the last section 6 we indicate few possible new directions to explore the proposed duality in the case of difference many-body systems (Ruijsenaars-Schneider models) and $q-$ Painlev\'e systems.

We have collected in the Appendix some technicalities related to an explicit computation of interactive terms for dual multi-partical Painlev\'e-Calogero Hamiltonians.

\begin{remark}
During the text we will use the following terminology - by the matrix Painlev\'e systems we denote Hamiltonian systems which were obtained by Kawakami in \cite{Kawakami1}. By Calogero-Painlev\'e systems we denote multiparticle systems obtained in \cite{BertolaCafassoRubtsov, Takasaki1}. By the dual systems we mean multi-particle systems dual to Calogero-Painlev\'e which are introduced in the section 3 of this text.
\end{remark}
\vskip 2mm \noindent{\bf Acknowledgements.} The authors are grateful to M. Cafasso, N. Nekrasov, E. Zenkevich and A. Zotov for helpful discussions. Our special thanks to M. Mazzocco who carefully read our draft and made several useful remarks. The research of I.G. is a part of his PhD program studies in the University of Birmingham. He is grateful to his advisor prof. M. Mazzocco for numerous and helpful discussions of the Hamiltonian reduction for autonomous and non-autonomous systems. 
  V.R. was partly supported by the project IPaDEGAN (H2020- MSCA-RISE-2017), Grant Number 778010 and by the Russian Foundation for Basic Research 16-51-53034- 716 GFEN.  I.G. is also grateful to School of Mathematics at University of Birmingham for a financial support. Both authors are grateful to the Russian Foundation for Basic Research for a partial support under the Grant RFBR 18-01-00461.They thank the Euler International Institute (Saint Petersburg) and organizers of the International Conference "Classical and Quantum Integrable Systems - 2019" for invitation and a possibility to report their results during the Conference. I.G. is also grateful to all participants and organizers of seminar "Methods of Classical and Quantum Integrable Systems" at Steklov Mathematical Institute (Moscow) for fruitful discussions.

\section{Ruijsenaars duality and Spectral duality}

In this section we remind some basic facts about Hamiltonian reduction and Ruijsenaars duality for many-body systems. We illustrate  the Ruijsenaars duality in the simplest example of self-dual rational Calogero system. Then we introduce a duality of another  kind  which comes from the reduction of matrix isospectral (or isomonodromic) systems. 
\subsection{Ruijsenaars duality. Rational Calogero-Moser system}
The rational Calogero-Moser may be obtained considering free particle Hamiltonian
$$
H = \operatorname{Tr}\left(\frac{\mathbf{p}^2}{2}\right)
$$
on the symplectic manifold $M$ given by (\ref{symM}). The equations of motion are
$$
\left\{\begin{array}{l}
     \dot{\mathbf{p}}=0 \\
     \dot{\mathbf{q}}=\mathbf{p}
\end{array}\right.
$$
Since $H$ is invariant under conjugation, moment map $\mathcal{M}$ (\ref{momentMap}) is a constant of motion. Fixing the level set of momentum given by
$$
[\mathbf{p},\mathbf{q}] = \sqrt{-1} g(\mathbf{1}-v^{T}\otimes v),\quad v=(1,\,1,\, \dots 1)
$$
we may find matrix $C$, such that $\mathbf{q} = CQC^{-1}$ and $C^{-1}(\mathbf{1}-v^{T}\otimes v)C = \mathbf{1}-v^{T}\otimes v$, where $Q=\delta_{ij}q_i$. Acting by $C$ on $\mathcal{M}$ we get
$$
C^{-1}[\mathbf{p},\mathbf{q}]C = [Q, P] = \sqrt{-1} g(\mathbf{1}-v^{T}\otimes v)
$$
Resolving moment map we obtain the entries of $P$ 
$$
P = p_i\delta_{ij} + (1-\delta_{ij})\frac{\sqrt{-1}g}{q_i-q_j}.
$$
Reduced Hamiltonian and symplectic form take form
$$
H=\operatorname{Tr}\left(\frac{\mathbf{p}^2}{2}\right)=\operatorname{Tr}\left(\frac{P^2}{2}\right)=\sum\limits_i \frac{p_i^2}{2} + \sum\limits_{i<j}\frac{g^2}{(q_i-q_j)^2},\quad \omega = \sum\limits_i\operatorname{d} p_i\wedge \operatorname{d}q_i
$$
and the equations of motion turn to
$$
\left\{\begin{array}{l}
    \dot{P} = [P, F]   \\
    \dot{Q} = P + [Q, F]
\end{array}\right.\quad F=-C^{-1}\dot{C}
$$
We see that $P$ is a Lax matrix for a rational Calogero system, so spectrum of $P$ $(I_1, I_2,\dots, I_n)$ are the first integrals for reduced system. We may also find a matrix $\tilde{C}$ which diagonalizes $\mathbf{p}$ such that $\tilde{C}^{-1}(\mathbf{1}-v^{T}\otimes v)\tilde{C} = \mathbf{1}-v^{T}\otimes v$. Reducing at this point we have dual coordinates
$$
\mathbf{p}=\tilde{C}\Lambda \tilde{C}^{-1},\quad \Lambda = \delta_{ij}I_i,\quad  \mathbf{q}=\tilde{C}\Phi \tilde{C}^{-1},\quad \Phi = \delta_{ij}\phi_i + (1-\delta_{ij})\frac{\sqrt{-1}g}{I_i-I_j}
$$
and Hamiltonian reduces to the
$$
H=\operatorname{Tr}\left(\frac{\mathbf{p}^2}{2}\right) = \sum\limits_j I_j,\quad \omega = \sum\limits_j \operatorname{d}I_j\wedge\operatorname{d}\phi_j
$$
which gives action-angle variables $(I_i,\phi_i)$. This coincides with the fact that eigenvalues of $P$ are constants of motion (actions). 

To describe Ruijsenaars duality we have to consider the flow generated by the dual free Hamiltonian
$$H_{\operatorname{(dual)}} = \operatorname{Tr}\left(\frac{\mathbf{q}^2}{2}\right)$$

It is obvious that the first set of reduced variables $\{q_1,q_2,...,q_n;\, p_1,p_2,...,p_n\}$ are the action angle variables. On the other hand for the dual coordinates we have
$$H_{\operatorname{(dual)}}= \sum\limits_i \frac{\phi_i^2}{2} + \sum\limits_{i<j}\frac{g^2}{(I_i-I_j)^2}$$

So the action angle variables for the Hamiltonian $H=\operatorname{Tr}(\mathbf{p}^2/2)$ are the coordinates and actions for the Hamiltonian $H_{\operatorname{(dual)}} = \operatorname{Tr}(\mathbf{q}^2/2)$ and vice versa. This duality is called Ruijsenaars duality or action-coordinate duality. Since the dual systems is a rational Calogero in variables $(I_i, \phi_i)$ the rational Calogero is a self-dual system.

In further text we will use $C$ for the matrix which diagonalizes $\mathbf{q}$ and $\tilde{C}$ for the matrix which diagonalizes $\mathbf{p}$, such that $C^{-1}(\mathbf{1}-v^{T}\otimes v)C = \tilde{C}^{-1}(\mathbf{1}-v^{T}\otimes v)\tilde{C}= \mathbf{1}-v^{T}\otimes v$ . We also will use the following notation for reduced coordinates
$$
Q=C^{-1}\mathbf{q}C=\delta_{ij}q_i,\quad P = C^{-1}\mathbf{p}C=\delta_{ij}p_i + (1-\delta_{ij})\frac{\sqrt{-1}g}{q_i-q_j}
$$
and for dual coordinates
$$
\Lambda = \tilde{C}^{-1}\mathbf{p}\tilde{C} = \delta_{ij}I_i,\quad  \Phi = \tilde{C}^{-1}\mathbf{q}\tilde{C} = \delta_{ij}\phi_i + (1-\delta_{ij})\frac{\sqrt{-1}g}{I_i-I_j}
$$
in the next sections. The fact that we may find such $C$ and $\tilde{C}$ means that the $GL(n)$-action orbit of each point $(\mathbf{q},\mathbf{p})$ from the level set of moment map $\sqrt{-1} g(\mathbf{1}-v^{T}\otimes v)$ intersects this set at least in two more points where $\mathbf{q}$ or $\mathbf{p}$ are diagonal, see figure \ref{fig1} for a sketch of intersection of group action orbit and momentum level set.


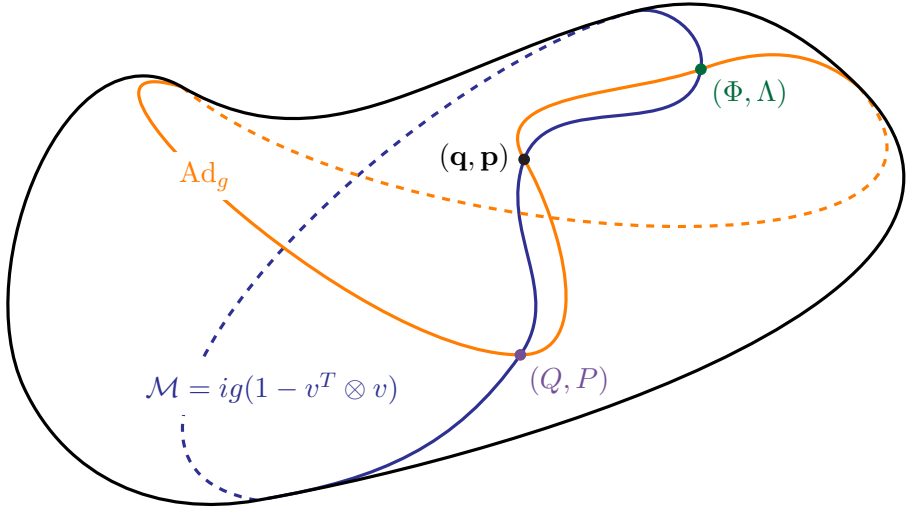
\begin{figure}
\centering
    \begin{tikzpicture}
 \node[below]  (x0) at (0.9,0) {};
 \node[below]  (x1) at (3,4) {};
 \node[below]  (x2) at (6,3.9) {};
 \node[below]  (x3) at (9,5) {};
 \node[below]  (x4) at (12,4) {};
 \node[below]  (x5) at (4,-1.5) {};
 
 \node[very thick] (red) at (7.5,0.3) {};
 \node[color=amethyst, above] (redCoord) at (8.15,-0.37) {$(Q, P)$};
 
 \node[color=asparagus, very thick] (dual) at (9.9,4.1) {};
 \node[color=asparagus, above] (dualCoord) at (10.55,3.45) {$(\Phi, \Lambda)$};
 
 \node[very thick] (start) at (7.55,2.9) {};
 \node[above] (startCoord) at (6.89,2.6) {$(\mathbf{q}, \mathbf{p})$};

 \draw[color=Blue, very thick] 
    (dual.center) to [out=80, in=15](x3.center)
    (dual.center) to [out=-100, in=70](start.center)
    (start.center) to [out=-110, in=55](red.center)
    (x5) to [out=10, in=-125](red.center);
 \draw[color=Blue, dashed, very thick] (x5.center) to [out=170, in=-165](x3.center);
 \draw[color=amber, dashed, below, very thick] (x1.center) to [out=-40, in=310](x4.center);
 \draw[color=amber, below,  very thick] 
    (red.center) to [out=180, in=-197](x1.center)
    (dual.center) to [out=20, in=140](x4.center)
    (dual.center) to [out=200, in=120](start.center)
    (start.center) to [out=-60, in=0](red.center);
 
 \draw[very thick] 
  (x5.center) to [out=10,in=-45](x4.center)
  (x0.center) to [out=-65,in=190](x5.center)
  (x0.center) to [out=115,in=150](x1.center)
  (x1.center) to [out=-30,in=200](x2.center)
  (x2.center) to [out=20,in=195](x3.center)
  (x3.center) to [out=15,in=135](x4.center);
 
 \filldraw [asparagus, above] (9.9,4.1) circle (2pt);
 \filldraw [Black, above] (7.55,2.9) circle (2pt); 
 \filldraw [amethyst, above] (7.5,0.3) circle (2pt);
 \node[color=Blue, below, fill=white!20] (m) at (4.2,0.2) {$\mathcal{M}=ig(1-v^{T}\otimes v)$};
 \node[color=amber,below,fill=white!20] (ad) at (3.3,3) {$\operatorname{Ad}_{g}$};
\end{tikzpicture}
\caption{Intersection of orbit of point $(\mathbf{q},\mathbf{p})$ with level set $\mathcal{M}$. 
\textcolor{Blue}{\rule[1.8pt]{17pt}{0.7mm}} - level set of momentum, \textcolor{amber}{\rule[1.8pt]{17pt}{0.7mm}} - orbit of group action.}
\label{fig1}
\end{figure}

\subsection{Spectral duality}

In a contrast to Ruijsenaars duality, we want to introduce another type of duality which we call spectral duality. This kind of duality is a special case of action-action duality and we formulate it in the next theorem
\begin{theorem}
Let $H(\mathbf{q},\mathbf{p})$ is a "rational" function (i.e. polynomial of the $\mathbf{q},\mathbf{p}$ and their inverses) on symplectic manifold  $M = \{(\mathbf{q},\mathbf{p})\in$  $\mathfrak{gl}(n)\times\mathfrak{gl}(n)\}$ and $\omega = \operatorname{Tr}\operatorname{d} \mathbf{p}\wedge \operatorname{d}\mathbf{q}$, which is invariant under adjoint $GL$-action. If Hamilton equations for  $H(\mathbf{q},\mathbf{p})$ have isospectral representation 
$$
L_t=[L,M],\quad L(\lambda),M(\lambda) \in \mathfrak{gl}(n)\otimes  \mathfrak{gl}(m)
$$
or isomonodromic representation
$$
L_t - M_\lambda + [L,M] = 0,\quad L(\lambda),M(\lambda) \in \mathfrak{gl}(n)\otimes  \mathfrak{gl}(m)
$$
where entries of $L$ and $M$ are "rational" functions of $\mathbf{q}$ and $\mathbf{p}$, then dual the systems $H(p_i,q_i)$ and $H(I_i,\phi_i)$ obtained by reduction from the level set of momentum
$$
[\mathbf{p},\mathbf{q}]=\sqrt{-1}g\left(\mathbf{1}-v^{T}\otimes v\right)
$$
have isospectral (isomonodromic) representation given by the gauge transform for $L$,$M$ operators. Furthermore, spectral curves for both systems are the same
$$
\Gamma^{(p,q)}(\lambda, \mu) = \Gamma^{(I,\phi)}(\lambda,\mu) = \operatorname{det}(L(\lambda) - \mu) = 0.
$$

\end{theorem}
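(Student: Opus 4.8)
The strategy is to treat the characteristic polynomial of the unreduced Lax matrix, $\operatorname{det}(L(\mathbf{q},\mathbf{p},\lambda)-\mu\mathbf{1})$, as a $GL(n)$-invariant function on $M$, to identify the two reduced Lax pairs with the gauge transforms of $(L,M)$ by the diagonalizing matrices $C$ and $\tilde{C}$, and then to invoke the fact that conjugation does not alter a characteristic polynomial.

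First I would make precise the hypothesis that the entries of $L$ and $M$ are ``rational'' in $\mathbf{q},\mathbf{p}$: they are built from $\mathbf{q}$, $\mathbf{p}$ and scalar functions of $\lambda$ and $t$ by matrix addition, multiplication and inversion. Every such operation is invariant under simultaneous conjugation, so $L$ and $M$ are $GL(n)$-equivariant, $L(g^{-1}\mathbf{q}g,g^{-1}\mathbf{p}g,\lambda)=\hat{g}^{-1}L(\mathbf{q},\mathbf{p},\lambda)\hat{g}$ with $\hat{g}$ the action of $g$ on the $\mathfrak{gl}(n)$-factor, $\hat{g}=g\otimes\mathbf{1}_{m}$ (and likewise for $M$). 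In particular $\operatorname{det}(L(\mathbf{q},\mathbf{p},\lambda)-\mu\mathbf{1})$ is $GL(n)$-invariant, hence descends to the reduced phase space built from the level set $[\mathbf{p},\mathbf{q}]=\sqrt{-1}g(\mathbf{1}-v^{T}\otimes v)$, and its value does not depend on which section of that level set one uses to compute it.

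Next I would carry out the Kazhdan--Kostant--Sternberg reduction in each of the two charts, exactly as in the rational Calogero example. Let $(\mathbf{q}(t),\mathbf{p}(t))$ solve the unreduced Hamilton equations on the momentum level set with $\mathbf{q}(0)$ diagonal; write $\mathbf{q}(t)=C(t)Q(t)C(t)^{-1}$, $\mathbf{p}(t)=C(t)P(t)C(t)^{-1}$ with $Q(t)$ diagonal, $P(t)$ of Calogero form and $C(0)=\mathbf{1}$, the matrix $C(t)$ taken so that $C^{-1}(\mathbf{1}-v^{T}\otimes v)C=\mathbf{1}-v^{T}\otimes v$ (here one uses that $\mathbf{q}$ has distinct eigenvalues, the genericity assumption). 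The reduced equations then read $\dot{Q}=(\partial H/\partial P)|_{(Q,P)}+[Q,F]$, $\dot{P}=-(\partial H/\partial Q)|_{(Q,P)}+[P,F]$ with $F=-C^{-1}\dot{C}$ determined by the condition that $\dot{Q}$ remain diagonal; conservation of $[\mathbf{p},\mathbf{q}]$ forces $\dot{P}$ to remain of Calogero form, and solving for the off-diagonal part of $F$ gives entries rational in $(q_{i},p_{i})$. Setting $L_{1}(t,\lambda):=L(Q(t),P(t),\lambda)=\hat{C}(t)^{-1}L(\mathbf{q}(t),\mathbf{p}(t),\lambda)\hat{C}(t)$ and differentiating in $t$, the Lax equation $L_{t}=[L,M]$ (respectively the zero-curvature equation $L_{t}-M_{\lambda}+[L,M]=0$) turns into $\partial_{t}L_{1}=[L_{1},M_{1}]$ (respectively $\partial_{t}L_{1}-\partial_{\lambda}M_{1}+[L_{1},M_{1}]=0$) with $M_{1}:=\hat{C}^{-1}M\hat{C}+\hat{F}$; crucially $\hat{C}$ is independent of $\lambda$, so $M_{\lambda}$ transforms covariantly and no spurious term appears. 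Thus $(L_{1},M_{1})$, the gauge transform of $(L,M)$ by $\hat{C}$, is an isospectral (isomonodromic) representation of the reduced system $H(p_{i},q_{i})$ with entries rational in $(q_{i},p_{i})$. Running the identical argument with $\tilde{C}$, the matrix that diagonalizes $\mathbf{p}$, yields $(L_{2},M_{2})=(\hat{\tilde{C}}^{-1}L\hat{\tilde{C}},\,\hat{\tilde{C}}^{-1}M\hat{\tilde{C}}+\hat{\tilde{F}})$, an isospectral (isomonodromic) representation of the dual system $H(I_{i},\phi_{i})$.

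Finally, since $L_{1}(\lambda)$ and $L_{2}(\lambda)$ are both conjugate to $L(\mathbf{q},\mathbf{p},\lambda)$ by the ($\lambda$-independent) matrices $\hat{C}$ and $\hat{\tilde{C}}$,
$$\operatorname{det}(L_{1}(\lambda)-\mu\mathbf{1})=\operatorname{det}(L(\mathbf{q},\mathbf{p},\lambda)-\mu\mathbf{1})=\operatorname{det}(L_{2}(\lambda)-\mu\mathbf{1}),$$
which is precisely $\Gamma^{(p,q)}(\lambda,\mu)=\Gamma^{(I,\phi)}(\lambda,\mu)=\operatorname{det}(L(\lambda)-\mu)=0$. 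I expect the only genuine obstacle to lie in the reduction step: verifying that the compensating gauge field $F$ (hence $M_{1}$) is really a rational function of the reduced variables, that the diagonalizing matrix $C(t)$ can be chosen smoothly inside the momentum-preserving subgroup, and --- more pedantically --- that the equivariance of $(L,M)$ holds in whatever form the isospectral/isomonodromic data is actually supplied; once these are in place the equality of spectral curves is the one-line invariance of the characteristic polynomial under conjugation.
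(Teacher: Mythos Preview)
Your proposal is correct and follows essentially the same route as the paper: gauge-transform the unreduced pair $(L,M)$ by $g=C\otimes\mathbf{1}_m$ (respectively $\tilde C\otimes\mathbf{1}_m$), use equivariance/rationality to see that the conjugated $L$ depends only on the reduced variables, and conclude equality of spectral curves from invariance of $\det(L-\mu)$ under conjugation. The only substantive items the paper adds are an explicit verification that $\mu=[\mathbf p,\mathbf q]$ is the moment map for the adjoint action, and a reference to \textbf{Lemma~2.5} of \cite{BertolaCafassoRubtsov} for the rationality of the compensating term $g^{-1}g_t$ --- precisely the point you flag as the only genuine obstacle.
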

\begin{proof}

First of all we show that $[\mathbf{p},\mathbf{q}]$ is indeed moment map and we may set it to be $ig\left(\mathbf{1}-v^{T}\otimes v\right)$. Vector field generated by the adjoint action of $GL$ on $M$ takes form
$$
    X_\xi = \frac{d}{dt}\left(e^{t\xi}\mathbf{q}e^{-t\xi}, e^{t\xi}\mathbf{p}e^{-t\xi}\right)\Big\vert_{t=0} = ([\xi,\mathbf{q}], [\xi,\mathbf{p}]) = \sum_{ij} [\xi, \mathbf{q}]_{ij}\frac{\partial}{\partial  \mathbf{q}_{ij}} + [\xi, \mathbf{q}]_{ij}\frac{\partial}{\partial  \mathbf{p}_{ij}}
$$
Inserting $X_\xi$ into symplectic form $\omega =  \operatorname{Tr}\operatorname{d} \mathbf{p}\wedge \operatorname{d}\mathbf{q}$ we get
$$
\omega(X_\xi, \circ) = \sum_{ij} -[\xi, \mathbf{q}]_{ij} \operatorname{d}\mathbf{p}_{ji} + [\xi, \mathbf{q}]_{ij}\operatorname{d}\mathbf{q}_{ji} = \operatorname{Tr}\left([\xi, \mathbf{q}]\operatorname{d}\mathbf{p}-[\xi, \mathbf{p}]\operatorname{d}\mathbf{q}\right)=\operatorname{d} \operatorname{Tr}\left(\xi [\mathbf{q},\mathbf{p}]\right)=-\operatorname{d} H_\xi
$$
So by definition (see \cite{Aud}) moment map has form
$$
H_\xi = \langle\mu, \xi \rangle,\quad \mu = [\mathbf{p},\mathbf{q}]
$$
Since Hamiltonian $H(\mathbf{p},\mathbf{q})$ is invariant under adjoint action, the moment map $\mu$ is a constant of motion with respect to the dynamics generated by $H(\mathbf{p},\mathbf{q})$. Since that, we may fix moment map to be
$$
[\mathbf{p},\mathbf{q}]=ig\left(\mathbf{1}-v^{T}\otimes v\right)
$$
and do the reduction diagonilising $\mathbf{q}$ and $\mathbf{p}$ using the same arguments as before. To obtain Lax (or Isomonodromic representation) we do the gauge transform for the eigenfunction of isospectral (isomnodromic) problem
$$
\Psi = (C\otimes \operatorname{Id}_m) Y = gY
$$
where $C$ is the diagonalising matrix of $\mathbf{q}$ or $\mathbf{p}$. That leads to the following action on the $L$ and $M$ matrices
$$
L\rightarrow g^{-1}Lg,\quad M\rightarrow g^{-1}Mg-g^{-1}g_t.
$$
Since gauge transform acts trivially on the second space of the tensor product, we get that each block element of the $L$ matrix conjugated by $C$. Since entries of $L$ and $M$ are also rational we obtain that entries of $L$ depends only on reduced coordinates, which finally gives the isospectral (isomonodromic) operator for reduced system. To obtain $M$ matrix we need to write explicitly $g^{-1}g_t$ which may be done with the help of reduced Hamilton equations (for a detailed proof we refer to the \textbf{Lemma 2.5} in \cite{BertolaCafassoRubtsov}).

Finally we have that $L$ matrix for the reduced system is given by conjugation of $L$ matrix for unreduced system, so they have the same eigenvalues and finally the same spectral curves. Since both reduced ($p,q$) and dual ($I,\phi$) systems are obtained by this procedure they spectral curves coincide with unreduced one and with each other.
\end{proof}
\begin{remark}
The difference between isospectral and isomonodromic cases, that for isospectral case the same spectral invariants are integrals of motion for reduced systems. For isomonodromic duality invariants are not conserved quantities since the dynamic is non-autonomous.
\end{remark}
Here we provide the simple example of kind of duality using matrix harmonic oscillator, since free particle Hamiltonian gives trivial results. In \cite{NN97} duality for Calogero-Moser-Sutherland systems is described via Hamiltonian reduction of the following Hamiltonian system
\begin{equation}\label{simSp}
H=\operatorname{Tr}\left(\frac{\mathbf{p}^2}{2}+\omega^2 \frac{\mathbf{q}^2}{2}\right)
\end{equation}
which we call matrix harmonic oscillator. The construction of dual system more complicated than in case of rational Calogero-Moser, because it requires polar decomposition to change the phase space from $T^{\star}\mathfrak{g}$ to the $T^\star G$, where $G$ is a Lie group and $\mathfrak{g}$ is corresponding Lie algebra. Obtained duality gives Ruijsenaars duality between Calogero and rational Ruijsenaars-Sneider systems.

On the other hand the Hamiltonian system given by (\ref{simSp}) may be written as isospectral deformation of block-matrix $L$-operator
$$
\left\{\begin{array}{l}
    L\Psi = \lambda\Psi \\
    \Psi_t = M\Psi
\end{array}\right.\quad L = \left[\begin{array}{cc}
    \mathbf{p} & \omega \mathbf{q}  \\
    \omega\mathbf{q}  & -\mathbf{p} 
\end{array}\right], \quad M = \frac{\omega}{2}\left[\begin{array}{cc}
    0 & -\operatorname{Id}_n \\
    \operatorname{Id}_n  & 0
\end{array}\right]
$$
The Lax equation $\dot{L} = [L, M]$ is equivalent to the Hamilton-Jacobi equations for the Hamiltonian (\ref{simSp}) with symplectic form $\omega = \operatorname{Tr}\operatorname{d}\mathbf{p}\wedge\operatorname{d}\mathbf{q}$. Hamiltonian $H$ is invariant under conjugation, so we may restrict to the level set of moment map
$$
[\mathbf{p}, \mathbf{q}] = \sqrt{-1}g(\mathbf{1}-v^{T}\otimes v)
$$
Reducing at $Q, P$ we get the system
$$
H^{(\operatorname{red})} = \sum\limits_i\left(\frac{p_i^2}{2} + \omega^2 \frac{q_i^2}{2}\right) + \sum\limits_{i<j}\frac{g^2}{(q_i-q_j)^2}
$$
The Lax pair for a reduced system is given by gauge
\begin{equation}\label{gauge}
\Psi = (C\otimes \operatorname{Id}_2) Y = gY
\end{equation}
which gives the following Lax pair
$$
\left\{\begin{array}{l}
    L^{(\operatorname{red})}Y = \lambda Y\\
    Y_t = (g^{-1}Mg-g^{-1}g_t)Y
\end{array}\right.\quad L^{(\operatorname{red})} = g^{-1}Lg =  \left[\begin{array}{cc}
    P & \omega  Q  \\
    \omega Q  & -P
\end{array}\right]
$$
Reducing at the $\Lambda, \Phi$ we obtain dual system
$$
H^{(\operatorname{dual})} = \sum\limits_i\left(\frac{I_i^2}{2} + \omega^2 \frac{\phi_i^2}{2}\right) + \sum\limits_{i<j}\frac{\omega^2 g^2}{(I_i-I_j)^2}
$$
with Lax pair
$$
\Psi = (\tilde{C}\otimes \operatorname{Id}_2)\tilde{Y} = \tilde{g}\tilde{Y} \quad
\left\{\begin{array}{l}
    L^{(\operatorname{dual})}\tilde{Y} = \lambda \tilde{Y}\\
    \tilde{Y}_t = (\tilde{g}^{-1}M\tilde{g}-\tilde{g}^{-1}\tilde{g}_t)\tilde{Y}
\end{array}\right.\quad L^{(\operatorname{dual})} = \tilde{g}^{-1}L\tilde{g} =  \left[\begin{array}{cc}
    \Lambda & \omega  \Phi  \\
    \omega \Phi  & -\Lambda
\end{array}\right]
$$
Since Lax operators $L,\,L_{(\operatorname{dual})},\,L_{(\operatorname{red})}$ are conjugate to each other the spectral curves for unreduced, reduced and dual systems are the same. Indeed
\begin{multline*}
    \Gamma (\lambda, \mu) = \operatorname{det}(L-\mu) = \operatorname{det}(g)\operatorname{det}(L^{(\operatorname{red})}-\mu)\operatorname{det}(g)^{-1} = \operatorname{det}(L^{(\operatorname{red})}-\mu) =\\= \Gamma^{(\operatorname{red})}(\lambda, \mu) = \operatorname{det}(\tilde{g})\operatorname{det}(L^{(\operatorname{dual})}-\mu)\operatorname{det}(\tilde{g})^{-1} = \Gamma^{(\operatorname{dual})}(\lambda, \mu) = 0
\end{multline*}
In this case there is no dependence on $\lambda$ in spectral curve $\Gamma(\lambda,\mu)=\Gamma(\mu)$, because we consider Lax pair without spectral parameter, in general case the spectral curve depends on both $\lambda$ and $\mu$. Since the spectral curves are the same
$$
 \Gamma^{(\operatorname{red})}(\lambda, \mu) = \Gamma^{(\operatorname{dual})}(\lambda, \mu) =  \Gamma (\lambda, \mu) = 0
$$
we call this spectral duality. This duality may be viewed as follows - if we solve Cauchy problem for unreduced system, the initial data will fix the coefficients of the spectral curve and fix the same data for reduced and dual systems. 

In case of free particle Hamiltonian $\operatorname{Tr}(\mathbf{p}^2/2)$ this spectral duality is trivial - we obtain rational Calogero system and free particle system which corresponds to the action angle variables. In this case this spectral duality is obvious. Consideration of matrix harmonic oscillator raise to nontrivial example of two systems which have the same spectral invariants. 

Obtained reduced system is self-dual in a sense that symplectic map
$$
I_i\rightarrow \omega q_i,\quad \phi_i\rightarrow -\frac{p_i}{\omega}
$$
maps $H^{\operatorname{(red)}}$ to $H^{\operatorname{(dual)}}$, in other words
$$
H^{\operatorname{(dual)}}(q_i,p_i) = H^{\operatorname{(red)}}\left(-\frac{p_i}{\omega}, \omega q_i\right)
$$
This symplectic self-duality comes from symmetry of unreduced Hamiltonian. In general case, the dual systems may not be self-dual.

The close type of duality was introduced in works \cite{MMRZZ131, MMRZZ132}, which provides  classical-classical or classical-quantum the duality for integrable systems. The most known examples are spectral duality between Gaudin model and Heisenberg Chain and spectral self-dual Toda lattice. The main difference is that spectral duality introduced by Morozov et. al. interchange $\lambda$ and $\mu$ on a spectral curve, for example Gaudin model and Heisenberg Chain spectral curves are connected in the following way
$$
\Gamma^{(\operatorname{Gaudin})}(\lambda, \mu) = \Gamma^{(\operatorname{Heisenberg})}(\mu, \lambda).
$$
So this duality may be viewed as Fourier transform between $\lambda$ and $\mu$. This Fourier transform duality  was firstly considered in series of works by Adams, Harnad, Hurtubise and Previato both for the isospectral and isomonodromic systems in order to provide the description of classical integrable systems in terms of coadjoint orbits of loop group action \cite{AHP88, AHH90, H94}. The main difference in out case that we do not have this twist of spectral parameters, since our duality comes not from the "Fourier" transform, but from the different reductions of the non-reduced systems.

\section{Dual Hamiltonians for multi-particle Painlev\'e-Calogero systems}

In this section we provide dual Hamiltonian systems which are obtained from Calogero-Painlev\'e IV,II and I by Hamiltonian reduction. In this section without loss of generality we will use $p_1,p_2\dots p_n$ and $q_1,q_2\dots q_n$ for the dual coordinates instead of  $I_1,I_2\dots I_n$ and $\phi_1,\phi_2\dots \phi_n$ since in this case they do not correspond to action-angle variables and we want to put more attention to the self-duality in case of $P_{\operatorname{IV}}$.
\subsection{Calogero-Painlev\'e IV}

\begin{proposition}\label{CPainIV}
There is an anti-symplectic involution of the reduced phase space for the Calogero-Painlev\`e IV  such that the reduced Hamiltonian is self-dual.
\end{proposition}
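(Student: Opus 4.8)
The plan is to realise the asserted self-duality as the identification, via an anti-symplectic involution, of the two Hamiltonians produced by reducing the matrix $P_{\mathrm{IV}}$ system at the two natural points of a $GL(n)$-orbit meeting the level set $[\mathbf p,\mathbf q]=\sqrt{-1}g(\mathbf 1-v^{T}\otimes v)$ (Figure~\ref{fig1}): the reduced Hamiltonian $H^{(\mathrm{red})}$ from diagonalising $\mathbf q$, and the dual one $H^{(\mathrm{dual})}$ from diagonalising $\mathbf p$. First I would compute both explicitly. Expanding $\operatorname{Tr}H_{\mathrm{IV}}=\operatorname{Tr}\!\bigl(2\mathbf q\mathbf p^{2}-\mathbf q^{2}\mathbf p-2t\,\mathbf q\mathbf p-2\kappa_{0}\mathbf p+\theta_{\infty}\mathbf q\bigr)$ and substituting $\mathbf q=Q=\operatorname{diag}(q_i)$, $\mathbf p=P=\operatorname{diag}(p_i)+(1-\delta_{ij})\tfrac{\sqrt{-1}g}{q_i-q_j}$, only the term $2\operatorname{Tr}(QP^{2})$ contributes an interaction, so that
$$H^{(\mathrm{red})}=\sum_{i}2q_i\Bigl(p_i^{2}-\bigl(\tfrac{q_i}{2}+t+\tfrac{\kappa_0}{q_i}\bigr)p_i+\tfrac{\theta_\infty}{2}\Bigr)+2g^{2}\sum_{i<j}\frac{q_i+q_j}{(q_i-q_j)^{2}},$$
which is Takasaki's $\tilde H_{\mathrm{IV}}$ up to the canonical transformation of \cite{BertolaCafassoRubtsov}. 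Diagonalising instead $\mathbf p=\Lambda=\operatorname{diag}(I_i)$ and solving for $\mathbf q=\Phi=\operatorname{diag}(\phi_i)+(1-\delta_{ij})\tfrac{\sqrt{-1}g}{I_i-I_j}$, the interaction now drops out of $-\operatorname{Tr}(\Phi^{2}\Lambda)$ and
$$H^{(\mathrm{dual})}=\sum_{i}\Bigl(2\phi_iI_i^{2}-\phi_i^{2}I_i-2t\,\phi_iI_i-2\kappa_0I_i+\theta_\infty\phi_i\Bigr)-g^{2}\sum_{i<j}\frac{I_i+I_j}{(I_i-I_j)^{2}}.$$
Both reductions take the symplectic form to $\sum_i\operatorname{d}p_i\wedge\operatorname{d}q_i$, respectively $\sum_i\operatorname{d}I_i\wedge\operatorname{d}\phi_i$, so after identifying $(\phi_i,I_i)$ with $(q_i,p_i)$ the two systems live on a single symplectic space; note that their single-particle parts already coincide and only the interaction differs (coupling $+2g^{2}$ vs.\ $-g^{2}$, built from positions vs.\ from momenta).

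The second step is to exhibit the involution: I claim it is $\iota:(q_i,p_i)\mapsto(-2p_i,-\tfrac12 q_i)$, accompanied by the relabelling $\kappa_0\leftrightarrow\theta_\infty$ of the two Okamoto constants. One checks at once that $\iota^{2}=\mathrm{id}$ and that $\iota^{*}\!\bigl(\sum_i\operatorname{d}p_i\wedge\operatorname{d}q_i\bigr)=\sum_i\operatorname{d}(-\tfrac12 q_i)\wedge\operatorname{d}(-2p_i)=\sum_i\operatorname{d}q_i\wedge\operatorname{d}p_i=-\sum_i\operatorname{d}p_i\wedge\operatorname{d}q_i$, so $\iota$ is an anti-symplectic involution of the reduced phase space. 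To see $H^{(\mathrm{red})}\circ\iota=H^{(\mathrm{dual})}$, observe that $2qp^{2}-q^{2}p$ and $qp$ are invariant under $(q,p)\mapsto(-2p,-q/2)$ (the scalar factor being $(-2)(-\tfrac12)=1$), that $-2\kappa_0p+\theta_\infty q$ goes to $\kappa_0q-2\theta_\infty p$ (whence the parameter swap), and that $2g^{2}\sum_{i<j}\tfrac{q_i+q_j}{(q_i-q_j)^{2}}\mapsto-g^{2}\sum_{i<j}\tfrac{p_i+p_j}{(p_i-p_j)^{2}}$, which is precisely the coupling of $H^{(\mathrm{dual})}$. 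Conceptually this $\iota$ is the descent of the anti-symplectic, conjugation-equivariant involution $\sigma:(\mathbf q,\mathbf p)\mapsto(-2\mathbf p,-\tfrac12\mathbf q)$ of the unreduced $M$, which leaves $\operatorname{Tr}H_{\mathrm{IV}}$ invariant up to $\kappa_0\leftrightarrow\theta_\infty$, carries $[\mathbf p,\mathbf q]=\sqrt{-1}g(\mathbf 1-v^{T}\otimes v)$ to the level set with $g$ replaced by $-g$ (under which the reduced systems are manifestly unchanged), and swaps diagonalisability of $\mathbf q$ with that of $\mathbf p$, thereby intertwining the two reductions. The same bookkeeping explains the restriction to $P_{\mathrm{IV}}$: for $P_{\mathrm{II}}$ and $P_{\mathrm{I}}$ the total degrees in $\mathbf q$ and in $\mathbf p$ of the Hamiltonian are mismatched (the quadratic momentum term $\tfrac12\operatorname{Tr}\mathbf p^{2}$ has no $\mathbf q$-partner), so no affine exchange of $\mathbf q$ and $\mathbf p$ can be a symmetry and those systems are not self-dual.

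The step I expect to be the main obstacle is the dual reduction itself: one must check that on the fixed level set $\mathbf p$ is generically diagonalisable with residual gauge reduced to scalars, so that $\Phi$ carries precisely the off-diagonal entries written above, and then propagate those entries through $\operatorname{Tr}(\Phi^{2}\Lambda)$ — the computation deferred to the Appendix — to pin down the sign and coefficient of the dual interaction. It is exactly because this coupling comes out as $-g^{2}\sum\tfrac{I_i+I_j}{(I_i-I_j)^{2}}$ rather than $+2g^{2}\sum\tfrac{\phi_i+\phi_j}{(\phi_i-\phi_j)^{2}}$ that the identity fails to realise the self-duality, which forces the nontrivial $\iota$ above; once that term is secured, the verification of the previous paragraph is routine.
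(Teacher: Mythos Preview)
Your proof is correct and follows the same strategy as the paper: compute $H^{(\mathrm{red})}$ and $H^{(\mathrm{dual})}$ explicitly from the two reductions and then exhibit the anti-symplectic involution exchanging them. The only substantive difference is normalisation: the paper works not with the Okamoto form but with the isomonodromic Hamiltonian $H=\operatorname{Tr}\bigl(\mathbf{pq}(\mathbf{p}-\mathbf{q}-t)+\theta_0\mathbf{p}-(\theta_0+\theta_1)\mathbf{q}\bigr)$ of \eqref{2.2}, so its involution is the plain swap $(q_i,p_i)\mapsto(-p_i,-q_i)$ together with a reparametrisation of $(\theta_0,\theta_1)$, rather than your scaled exchange $(q_i,p_i)\mapsto(-2p_i,-\tfrac12 q_i)$ with $\kappa_0\leftrightarrow\theta_\infty$. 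Your further remark that the involution lifts to $\sigma:(\mathbf{q},\mathbf{p})\mapsto(-2\mathbf{p},-\tfrac12\mathbf{q})$ on the unreduced $M$, interchanging the two diagonalisation loci while sending $g\mapsto -g$ on the momentum level set, is not in the paper's argument and supplies a cleaner conceptual reason for why self-duality holds here but fails for $P_{\mathrm{II}}$ and $P_{\mathrm{I}}$.
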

{\it Proof:}
The isomonodromic linear problem for the fourth matrix Painlev\'e system reads
\begin{equation}\label{2.1}
  \left\{\begin{array}{l}
    \frac{\partial}{\partial \lambda} \Psi = \left[\begin{array}{cc}
                                                      \frac{\mathbf{pq}}{\lambda} & \mathbf{q}\mathbf{p} +\theta_0 + \theta_1 -\frac{\mathbf{pqp}+\theta_0\mathbf{p}}{\lambda} \\
                                                       1 + \frac{\mathbf{q}}{\lambda}  & -\lambda + t + \frac{\mathbf{qp}+\theta_0}{\lambda}
                                                    \end{array}\right]\Psi\\ \\
  \frac{\partial}{\partial t}\Psi = \left[\begin{array}{cc}
                                                      i\frac{\lambda}{2} & \mathbf{q} \\
                                                      \mathbf{q} & -i\frac{\lambda}{2}
                                                    \end{array}\right]\Psi
                                                    \end{array}\right.
\end{equation}
The compatibility conditions are
\begin{equation}\label{2.2}
    \left\{\begin{array}{l}
           \dot{\mathbf{q}} = [\mathbf{p},\mathbf{q}]_{+} - \mathbf{q}^2 - t\mathbf{q}+\theta_0 \\
           \dot{\mathbf{p}} = [\mathbf{p},\mathbf{q}]_{+} - \mathbf{p}^2 + t \mathbf{p} + \theta_0 + \theta_1
         \end{array}\right.\quad
           H = \operatorname{Tr}\left( \mathbf{pq} (\mathbf{p}-\mathbf{q}-t) + \theta_0\mathbf{p}-(\theta_0+\theta_1)\mathbf{q} \right).
\end{equation}

In dual coordinates
\begin{equation}\label{2.3}
  \Lambda = \operatorname{diag}(p_1,p_2,...p_n),\quad \Phi=\operatorname{diag}(q_1,q_2,...q_n) - \left(\frac{ig}{p_i-p_j}\right)_{i\neq j}
\end{equation}
Hamiltonian reads
\begin{equation}\label{DIV}
  H_{\operatorname{IV}}^{(\operatorname{dual})} = \sum\limits_i\left[q_ip_i^2-p_iq_i^2-tq_ip_i+\theta_0p_i-(\theta_0+\theta_1)q_i\right] - g^2\sum\limits_{i<j}\frac{p_i+p_k}{(p_i-p_j)^2}
\end{equation}
Hence, we obtain that the change of variables 
\begin{displaymath}
  q_i\rightarrow -p_i,\quad p_i\rightarrow-q_i,\quad \theta_0\rightarrow \theta_1,\quad\theta_1\rightarrow\theta_0- \theta_1
\end{displaymath}
transforms $ H_{\operatorname{IV}}^{(\operatorname{dual})}$ to
\begin{equation}\label{CPIV}
  H_{\operatorname{IV}}^{(\operatorname{red})} = \sum\limits_i\left[q_ip_i^2-p_iq_i^2-tp_iq_i+\theta_0q_i-(\theta_0+\theta_1)p_i\right] + g^2\sum\limits_{i<j}\frac{q_i+q_k}{(q_i-q_j)^2}
\end{equation}
which is obtained by reduction at $(Q,P)$. This map changes the sign of the symplectic form, so we call it anti-symplectic involution. This coincides with the fact, that $P_{\operatorname{IV}}$ Hamiltonian from \eqref{2.2} invariant under the same change of variables in terms of $\mathbf{p}$ and $\mathbf{q}$, but symplectic form changes the sign.

\subsection{Calogero-Painlev\'e II}

We shall see that (in a contrast with the precedent model) the dual Hamiltonian of the Calogero-Painlev\`e II system doesn't "survive"under this duality map.

\begin{proposition}
The Calogero - Painlev\'e II system is {\it not} a self-dual. Moreover, the the dual system admits a {\it quadruple-wise} particle interaction, while in case of $\mathbf{q}$-diagonal reduction we obtain only the pair-wise particle interaction.
\end{proposition}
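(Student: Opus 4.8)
The strategy parallels the Calogero-Painlevé IV case treated in Proposition~\ref{CPainIV}. Starting from the isomonodromic linear problem for the matrix Painlevé II system (the $2n\times 2n$ analogue of the system in \eqref{0.1} with Hamiltonian $H=\operatorname{Tr}\left(\frac{\mathbf{p}^2}{2}-(\mathbf{q}^2+\frac{t}{2})\mathbf{p}-(\alpha+\frac12)\mathbf{q}\right)$), I would first perform the $\mathbf{q}$-diagonal reduction on the level set $[\mathbf{p},\mathbf{q}]=\sqrt{-1}g(\mathbf{1}-v^T\otimes v)$: conjugate by $C$ so that $Q=C^{-1}\mathbf{q}C=\operatorname{diag}(q_i)$ and $P=C^{-1}\mathbf{p}C=\operatorname{diag}(p_i)+(1-\delta_{ij})\frac{\sqrt{-1}g}{q_i-q_j}$, then substitute into $\operatorname{Tr}H$. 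The terms $\operatorname{Tr}(\mathbf{p}^2/2)$ and $\operatorname{Tr}(\mathbf{q}^2\mathbf{p})$ produce the familiar Calogero-type pairwise sum $g^2\sum_{i<j}(q_i-q_j)^{-2}$ together with cross terms; since $\mathbf{q}$ is diagonal the potential $\operatorname{Tr}(\mathbf{q}^2\mathbf{p})=\sum_i q_i^2 p_i$ contributes no interaction, and one recovers the expected $\tilde H_{\mathrm{II}}$ with a single pairwise interaction term. This is the routine half and can be cited from \cite{BertolaCafassoRubtsov, Takasaki1}.

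The substantive part is the \emph{dual} reduction, where instead one diagonalizes $\mathbf{p}$: set $\Lambda=\tilde C^{-1}\mathbf{p}\tilde C=\operatorname{diag}(p_i)$ and $\Phi=\tilde C^{-1}\mathbf{q}\tilde C=\operatorname{diag}(q_i)+(1-\delta_{ij})\frac{\sqrt{-1}g}{p_i-p_j}$, and compute $\operatorname{Tr}H$ in these variables. Now the term $\operatorname{Tr}(\mathbf{p}^2/2)=\sum_i p_i^2/2$ is harmless, but the crucial term $\operatorname{Tr}(\mathbf{q}^2\mathbf{p})$ becomes $\operatorname{Tr}(\Phi^2\Lambda)$ with $\Phi$ \emph{non-diagonal}: writing $\Phi=\Phi_d+\Phi_{od}$ one gets $\operatorname{Tr}(\Phi_d^2\Lambda)+2\operatorname{Tr}(\Phi_d\Phi_{od}\Lambda)+\operatorname{Tr}(\Phi_{od}^2\Lambda)$. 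The off-diagonal piece $\Phi_{od}$ has entries $\sqrt{-1}g/(p_i-p_j)$, so $\operatorname{Tr}(\Phi_{od}^2\Lambda)=-g^2\sum_{i\neq j}\frac{p_i}{(p_i-p_j)^2}$ — still pairwise — but $\operatorname{Tr}(\mathbf{q}^3)$-type or higher contributions, and more importantly the need to also track $\operatorname{Tr}((\mathbf{q}^2+t/2)\mathbf{p})$ together with the structure of how $\Phi^2$ couples three distinct indices, will generate genuine three- and four-index sums. Concretely, expanding $(\Phi^2)_{ij}=\sum_k \Phi_{ik}\Phi_{kj}$ and then taking $\operatorname{Tr}(\Phi^2\Lambda)=\sum_{i,k}\Phi_{ik}\Phi_{ki}p_i$ already mixes indices $i,k$; but the quartic structure emerges once one realizes that the matrix Painlevé II Hamiltonian is really cubic in the non-commuting variables, so that terms like $\operatorname{Tr}(\mathbf{q}\mathbf{p}\mathbf{q}\mathbf{p})$-flavored contributions (hidden inside reordering $\mathbf{q}^2\mathbf{p}$ versus the symmetrized form dictated by the linear problem) yield $\sum \frac{1}{(p_i-p_j)(p_k-p_l)}$-type expressions. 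I would carry out this expansion carefully, collect the genuinely irreducible four-index terms, and exhibit at least one that does not reduce (by partial fractions) to pairwise sums — this is exactly the computation deferred to the Appendix.

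Finally, to establish \emph{non}-self-duality, I would compare $H_{\mathrm{II}}^{(\mathrm{red})}$ and $H_{\mathrm{II}}^{(\mathrm{dual})}$ and show that no affine change of variables $q_i\mapsto a p_i + b$, $p_i\mapsto c q_i + d$ (together with a relabeling of the parameter $\alpha$) can intertwine them: the reduced Hamiltonian has only a $g^2\sum_{i<j}(q_i-q_j)^{-2}$ interaction, whereas the dual one carries the quadruple interaction just isolated, and the number of independent interaction terms is an affine invariant, so the two cannot match. The unavoidable obstacle — and the only place where real work happens — is the bookkeeping of the non-commutative cubic term under $\mathbf{p}$-diagonalization: one must be scrupulous about operator ordering (the precise form of $H$ in \eqref{2.2}-style notation for $P_{\mathrm{II}}$ coming from the compatibility of the linear problem, not a naively symmetrized version) because the appearance of the quartic interaction depends entirely on which ordering survives conjugation by $\tilde C$. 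Everything else is partial-fraction algebra of the type already illustrated for $P_{\mathrm{IV}}$.
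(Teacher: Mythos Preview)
Your proposal has a genuine gap: you start from the wrong form of the matrix Painlev\'e II Hamiltonian. You use the Okamoto polynomial form
\[
H=\operatorname{Tr}\Bigl(\tfrac{\mathbf{p}^2}{2}-(\mathbf{q}^2+\tfrac{t}{2})\mathbf{p}-(\alpha+\tfrac12)\mathbf{q}\Bigr),
\]
which is only \emph{cubic} in the non-commuting variables, whereas the paper's proof (see \eqref{1.2}) works with the physical form
\[
H=\operatorname{Tr}\Bigl(\tfrac{\mathbf{p}^2}{2}-\tfrac{1}{2}\bigl(\mathbf{q}^2+\tfrac{t}{2}\bigr)^2-\theta\mathbf{q}\Bigr),
\]
which is \emph{quartic} in $\mathbf{q}$. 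The two are related by the canonical transformation $\tilde{\mathbf{p}}=\mathbf{p}+\mathbf{q}^2+t/2$, and although this preserves both the symplectic form and the moment map, it does \emph{not} preserve the dual reduction: diagonalizing $\mathbf{p}$ is not the same as diagonalizing $\tilde{\mathbf{p}}$. The paper itself makes this point explicitly (Remark after Theorem~4.2), and even writes out the dual reduction of the Okamoto form --- it has only the pairwise term $-g^2\sum_{i\neq j}\frac{p_i}{(p_i-p_j)^2}$, exactly as your own computation of $\operatorname{Tr}(\Phi_{od}^2\Lambda)$ finds.

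Consequently your attempt to conjure a quadruple interaction from the Okamoto form via ``reordering'' cannot work: $\operatorname{Tr}(\mathbf{q}^2\mathbf{p})=\operatorname{Tr}(\mathbf{q}\mathbf{p}\mathbf{q})=\operatorname{Tr}(\mathbf{p}\mathbf{q}^2)$ by cyclicity of trace, so there are no hidden $\operatorname{Tr}(\mathbf{q}\mathbf{p}\mathbf{q}\mathbf{p})$-type contributions, and a cubic Hamiltonian in $(\mathbf{p},\mathbf{q})$ is at most quadratic in the off-diagonal part of $\Phi$, hence produces at most pairwise interactions. In the paper's approach the quadruple term arises transparently from $-\tfrac12\operatorname{Tr}(\mathbf{q}^4)\to -\tfrac12\operatorname{Tr}(\Phi^4)$: writing $\Phi=D+gA$ with $D$ diagonal and $A_{ij}=\sqrt{-1}/(p_i-p_j)$, the $g^4$ coefficient is $\operatorname{Tr}(A^4)$, whose expansion over four indices $i,j,k,l$ is precisely the three- and four-body sum in \eqref{1.5} (this is the Appendix computation). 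Once you switch to the quartic Hamiltonian, the rest of your plan --- including the non-self-duality argument based on the mismatch of interaction structure --- goes through and matches the paper.
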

{\it Proof:} The isomonodromic problem for the matrix Painlev\'e II system is given by
\begin{equation}\label{1.1}
  \left\{\begin{array}{l} 
    \frac{\partial}{\partial \lambda} \Phi = \left[\begin{array}{cc}
                                                      i\frac{\lambda^2}{2} +i \mathbf{q}^2 + i\frac{t}{2}& \lambda\mathbf{q}-i\mathbf{p} -\frac{\theta}{\lambda} \\
                                                       \lambda\mathbf{q}+i\mathbf{p} -\frac{\theta}{\lambda}  & -i\frac{\lambda^2}{2} -i \mathbf{q}^2 - i\frac{t}{2}
                                                    \end{array}\right]\Phi\\ \\
  \frac{\partial}{\partial t} \Phi = \left[\begin{array}{cc}
                                                      i\frac{\lambda}{2} & \mathbf{q} \\
                                                      \mathbf{q} & -i\frac{\lambda}{2}
                                                    \end{array}\right]\Phi
                                                    \end{array}\right.
\end{equation}
which compatibility condition leads to the following equations
\begin{equation}\label{1.2}
  \left\{\begin{array}{l}
           \dot{\mathbf{q}} = \mathbf{p} \\
           \dot{\mathbf{p}} = 2\mathbf{q}^3 + t\mathbf{q} + \theta
         \end{array}\right.\quad
         H = \operatorname{Tr}\left(\frac{\mathbf{p}^2}{2} - \frac{1}{2}\left(\mathbf{q}^2+\frac{t}{2}\right)^2-\theta \mathbf{q}\right)
\end{equation}
In reduced coordinates $P$ and $Q$ the Hamiltonian reads
\begin{equation}\label{CPII}
H_{\operatorname{II}}^{(\operatorname{red})}=\sum_{j=1}^{n}\left[\frac{p_{j}^{2}}{2}-\frac{1}{2}\left(q_{j}^{2}+\frac{t}{2}\right)^{2}-\alpha q_{j}\right]+g^{2} \sum_{j \neq k} \frac{1}{\left(q_{j}-q_{k}\right)^{2}} 
\end{equation}
In dual coordinates Hamiltonian turns to
\begin{equation}\label{1.5}
\begin{gathered}
  H_{\operatorname{II}}^{(\operatorname{dual})} = \sum\limits_i\left[\frac{p_i^2}{2}-\frac{q_i^4}{2}-\frac{t}{2}q_i^2-\theta q_i\right] + 2g^2\sum\limits_{i<j}\frac{q_i^2+q_iq_j+q_j^2+\frac{t}{2}}{(p_i-p_j)^2}-2g^4\sum\limits_{i<j}\frac{1}{(p_i-p_j)^4}-\\
 -2g^4\left(\sum\limits_{i< j<k}\frac{1}{(p_i-p_j)^2(p_j-p_k)^2} + \sum\limits_{i< j<k<l}\frac{2}{(p_i-p_j)(p_j-p_k)(p_k-p_l)(p_l-p_i)} \right) 
  \end{gathered}
\end{equation}

 We shall postpone to the Appendix all technical details of explicit computations for  interaction terms in this Hamiltonian.
 
\subsection{Calogero-Painlev\'e I}
The isomonodromic problem for matrix $P_{\operatorname{I}}$ equation is given by
\begin{equation}\label{P1}
  \left\{\begin{array}{l} 
    \frac{\partial}{\partial \lambda} \Phi = \left[\begin{array}{cc}
                                                      \mathbf{p} & \lambda-\mathbf{q}\\
                                                      \lambda^2+\lambda\mathbf{q}+\mathbf{q}^2+\frac{t}{2}   & -\mathbf{p}
                                                    \end{array}\right]\Phi\\ \\
  \frac{\partial}{\partial t} \Phi = \left[\begin{array}{cc}
                                                      0 &  \frac{1}{2}\\
                                                      \frac{\lambda}{2}+\mathbf{q} & 0
                                                    \end{array}\right]\Phi \end{array}\right.
\end{equation}
gives the following Hamiltonian system
\begin{equation}\label{0.2}
 \left\{\begin{array}{ccl}
                                            \mathbf{\dot{q}} & = & \mathbf{p}   \\
                                            \mathbf{\dot{p}} & = & \frac{3}{2}\mathbf{q}^2+\frac{t}{2}
                                          \end{array}\right. \quad H = \operatorname{Tr}\left(\frac{\mathbf{p}^2}{2}-\frac{\mathbf{q}^3}{2} -\frac{t}{4}\mathbf{q}\right)
\end{equation}
In reduced coordinates $Q,P$ Hamiltonian is
\begin{equation}
H^{(\operatorname{red})}_{\operatorname{I}}=\sum_{j=1}^{n}\left[\frac{p_{j}^{2}}{2}- \frac{q_{j}^{3}}{2}-\frac{t}{4} q_{j}\right]+g^{2} \sum_{i<j} \frac{1}{\left(q_{i}-q_{j}\right)^{2}}
\end{equation}
Taking dual coordinates $\Lambda,\Phi$ we obtain
\begin{equation}
      H_{\operatorname{I}}^{(\operatorname{dual})}=\sum\limits_{i=0}^n\left[ \frac{p_i^2}{2}-\frac{q_i^3}{2}-\frac{t}{4}q_i\right] - \frac{3g^2}{2}\sum\limits_{j< i}\frac{q_i+q_j}{(p_i-p_j)^2}
\end{equation}
The obtained systems are not connected by anti-symplectic involution.

\begin{remark}
The self-dual Calogero-Painlev\'e IV system is completely integrable since its autonomous version is canonicaly equivalent to rational Inozemtsev system, which is completely integrable \cite{Takem1}. We don't know if the dual  Calogero-Painlev\'e I and II systems are completely integrable, this question requires further investigation and will be addressed elsewhere. 
\end{remark}

\section{Autonomous form of multi-particle Painlev\'e equations. Painlev\'e-Calogero correspondence}

In this section we write down Lax pairs for the autonomous versions of the Hamiltonians of noncommutative $P_{\operatorname{I}},\, P_{\operatorname{II}},\, P_{\operatorname{IV}}$. This procedure is nothing more than well-known Painlev\'e-Calogero correspondence for this type of equations. We use $\tau$ as parameter which doesn`t depend on time $t$ further in the text.

\subsection{Calogero-Painlev\'e II}

Autonomous form of matrix $P_{\operatorname{II}}$ has form
\begin{equation}\label{0.3}
\begin{gathered}
\left\{\begin{array}{ll}
    L(\mathbf{q},\mathbf{p})\Phi = \lambda \Phi \,\,&  L(\mathbf{q},\mathbf{p}; \lambda)= \mathcal{A}(\mathbf{q},\mathbf{p}, \lambda, \tau) = \left[\begin{array}{cc}
                                                      i\frac{\lambda^2}{2} +i \mathbf{q}^2 + i\frac{\tau}{2}& \lambda\mathbf{q}-i\mathbf{p} -\frac{\theta}{\lambda} \\
                                                       \lambda\mathbf{q}+i\mathbf{p} -\frac{\theta}{\lambda}  & -i\frac{\lambda^2}{2} -i \mathbf{q}^2 - i\frac{\tau}{2}
                                                    \end{array}\right] \\
    \frac{\partial}{\partial t} \Phi = M(\mathbf{q},\mathbf{p})\Phi \,\, & M(\mathbf{q},\mathbf{p};  \lambda)= \mathcal{B}(\mathbf{q},\mathbf{p},  \lambda, \tau)= \left[\begin{array}{cc}
                                                      i\frac{\lambda}{2} & \mathbf{q} \\
                                                      \mathbf{q} & -i\frac{\lambda}{2}
                                                    \end{array}\right]
\end{array}\right. \\
L_t + [L, M] = 0 \quad \Rightarrow \quad
                                                \left\{\begin{array}{l}
           \dot{\mathbf{q}} = \mathbf{p} \\
           \dot{\mathbf{p}} = 2\mathbf{q}^3 + \tau\mathbf{q} + \theta
         \end{array}\right.\\
         H = \operatorname{Tr}\left(\frac{\mathbf{p}^2}{2} - \frac{1}{2}\left(\mathbf{q}^2+\frac{\tau}{2}\right)^2-\theta \mathbf{q}\right)
\end{gathered}
\end{equation}

The moment map $[\mathbf{p}, \mathbf{q}]$ is still the first integral, so resticting to the level set
$$
  [\mathbf{p}, \mathbf{q}] = \sqrt{-1} g(\mathbf{1}-v^{T}\otimes v),\quad v=(1,1,...,1)
$$
and using reduced coordinates $P$ and $Q$ we get
$$
H_{II}^{(\operatorname{red})} = \sum_{j=1}^{n}\left[\frac{p_{j}^{2}}{2}-\frac{1}{2}\left(q_{j}^{2}+\frac{\tau}{2}\right)^{2}-\alpha q_{j}\right]+g^{2} \sum_{i<j} \frac{1}{\left(q_{i}-q_{j}\right)^{2}} 
$$
In dual coordinates $(\Phi, \Lambda)$ we obtain the following Hamiltonian system
\begin{displaymath}
\begin{gathered}
  H_{II}^{(\operatorname{dual})} = \sum\limits_i\left[\frac{p_i^2}{2}-\frac{q_i^4}{2}-\frac{\tau}{2}q_i^2-\theta q_i\right] +  2g^2\sum\limits_{i<j}\frac{q_i^2+q_iq_j+q_j^2+\frac{\tau}{2}}{(p_i-p_j)^2}-g^4\sum\limits_{i<j}\frac{1}{(p_i-p_j)^4}-\\ 
-2g^4\left(\sum\limits_{i<j<k}\frac{1}{(p_i-p_j)^2(p_j-p_k)^2} + \sum\limits_{i< j<k<l}\frac{2}{(p_i-p_j)(p_j-p_k)(p_k-p_l)(p_l-p_i)} \right)  
 \end{gathered}
\end{displaymath}
The Lax pair turns to
\begin{equation}\label{3.5}
\left\{\begin{array}{l}
  \tilde{L}\Psi = \lambda \Psi \\ \\
\frac{\partial}{\partial t} \Psi = (M- F\otimes\mathbf{1}_2)\Phi
\end{array}\right.
\end{equation}
where in coordinates $F$ in reduced and dual cases takes form
\begin{displaymath}
\begin{gathered}
  (q_i-q_j)^2F^{(\operatorname{red})}_{i,j} = ([A(Q,P),Q])_{i,j},\quad (p_i-p_j)^2F^{(\operatorname{dual})}_{i,j} = ([B(\Phi,\Lambda),\Lambda])_{i,j}\\
  F_{j,j}= -\sum\limits_{k\neq j}F_{j,k}+\frac{1}{n}\sum\limits_{l\neq k} F_{l,k}
\end{gathered}
\end{displaymath}
The $L$-matrix goes to
\begin{equation}
\begin{gathered}
L^{(\operatorname{red})}=P\otimes\sigma_2+\sqrt{-1}\left(\frac{\lambda^2}{2}+Q^2+\frac{t}{2}\right)\otimes\sigma_3 + \left(\lambda Q-\frac{\theta}{\lambda}\right)\otimes\sigma_1\\
L^{(\operatorname{dual})}=\Lambda\otimes\sigma_2+\sqrt{-1}\left(\frac{\lambda^2}{2}+\Phi^2+\frac{t}{2}\right)\otimes\sigma_3 + \left(\lambda\Phi-\frac{\theta}{\lambda}\right)\otimes\sigma_1
\end{gathered}
\end{equation}
where $\sigma_1, \sigma_2, \sigma_3$ are the Pauli matrices. 

\subsection{Calogero-Painlev\'e I}

"Freezing" $t$ in pair \eqref{P1} we obtain the following autonomous Hamiltonian system
\begin{equation}\label{3.1}
\begin{gathered}
\left\{\begin{array}{ll}
    L(\mathbf{q},\mathbf{p})\Phi = \lambda \Phi \,\,&  L(\mathbf{q},\mathbf{p}; \lambda)= \mathcal{A}(\mathbf{q},\mathbf{p}, \lambda, \tau) = \left[\begin{array}{cc}
                                                      \mathbf{p} &  \lambda-\mathbf{q}\\
                                                      z^2+z\mathbf{q}+\mathbf{q}^2+\frac{\tau}{2}   & -\mathbf{p}
                                                    \end{array}\right] \\
    \frac{\partial}{\partial t} \Phi = M(\mathbf{q},\mathbf{p})\Phi \,\, & M(\mathbf{q},\mathbf{p};  \lambda)= \mathcal{B}(\mathbf{q},\mathbf{p},  \lambda, \tau)=\left[\begin{array}{cc}
                                                      0 &  \frac{1}{2}\\
                                                      \frac{\lambda}{2}+\mathbf{q} & 0
                                                    \end{array}\right]
\end{array}\right. \\
L_t + [L, M] = 0 \quad \Rightarrow \quad \left\{\begin{array}{ccccl}
                                            \mathbf{\dot{q}} & = & A(\mathbf{p},\mathbf{q}) & = & \mathbf{p}   \\
                                            \mathbf{\dot{p}} & = & B(\mathbf{p},\mathbf{q}) & = &\frac{3}{2}\mathbf{q}^2+\frac{\tau}{2}
                                          \end{array}\right.\quad \\
                                                H^{\operatorname{(aut)}} = \operatorname{Tr}\left(\frac{\mathbf{p}^2}{2}-\frac{\mathbf{q}^3}{2}  -\frac{\tau}{4}\mathbf{q}\right)
\end{gathered}
\end{equation}
Restricting to the level set
$$
  [\mathbf{p}, \mathbf{q}] = \sqrt{-1} g(\mathbf{1}-v^{\mathbf{\operatorname{T}}}\otimes v),\quad v=(1,1,...,1)
$$
and then we use reduced coordinates $(Q,P)$
$$
H^{(\operatorname{red})}_{\operatorname{I}}=\sum_{j=1}^{n}\left[\frac{p_{j}^{2}}{2}-2 q_{j}^{3}-t q_{j}\right]+g^{2} \sum_{i < j} \frac{1}{\left(q_{i}-q_{j}\right)^{2}}
$$
In dual coordinates $(\Phi,\Lambda)$ we obtain the following Hamiltonian system
\begin{displaymath}
  H_I^{(\operatorname{dual})}=\sum\limits_{i=0}^n\left[ \frac{p_i^2}{2}-\frac{q_i^3}{2}-\frac{\tau}{4}q_i\right] - \frac{3g^2}{2}\sum\limits_{j< i}\frac{q_i+q_j}{(p_i-p_j)^2}
\end{displaymath}

The $L$-matrix has the following form
\begin{equation}\label{0.5}
\begin{gathered}
L^{(\operatorname{red})} = P\otimes\sigma_3+(\lambda-Q)\otimes\sigma_{+}+(\lambda^2+\frac{\tau}{2}+\lambda Q+Q^2)\otimes\sigma_{-}\\
  L^{(\operatorname{dual})} = \Lambda\otimes\sigma_3+(\lambda-\Phi)\otimes\sigma_{+}+(\lambda^2+\frac{\tau}{2}+\lambda\Phi+\Phi^2)\otimes\sigma_{-}
\end{gathered}
\end{equation}
where $\sigma_3,\sigma_{+},\sigma_{-}$ are the elements of the Cartan-Weyl basis of $\mathfrak{sl}(2,\mathbb{C})$.

\subsection{Painlev\'e-Calogero correspondence. Degeneration of rational Inozemtsev system}
Well known confluence scheme for the Painlev\'e equations \cite{OhOk06,Tak01} holds for the matrix Painlev\'e systems. From the point of view of reduced Calogero-Painlev\'e systems for the matrix Painlev\'e VI, V and IV these confluences are nothing but the nonautonomous version of degeneration for the Inozemtsev Hamiltonians. In the previous sections we show that corresponding autonomous multi-particle systems for Calogero-Painlev\'e I and II may be written in Lax form. According to Takasaki \cite{Takasaki1} multi-particle Calogero-Painlev\'e I and II systems should be further degenerations of the Inozemtsev rational Hamiltonians. 

This degeneration may be obtained by autonomization of confluence scheme for the Painlev\`e equations combined with the symplectic transformations given in \cite{BertolaCafassoRubtsov} to map reduced Hamiltonians to Inozemtsev "physical" Hamiltonians. In case of Calogero-Painlev\'e II and I we obtain physical Hamiltonians $\sum\limits_i \frac{p_i^2}{2}+V(q_i,t)$ by reduction without any additional canonical transforms. 

In this section we will use $\mathbf{q}^{(\operatorname{IV})}$ and  $\mathbf{q}^{(\operatorname{II})}$ for noncommutative canonical coordinates for the matrix Painlev\'e IV and the matrix Painlev\'e II systems respectively. We use the same notation for noncommutative moments and also for canonical cooordinates for reduced systems. 

\begin{theorem}
The conluence procedure from matrix Painlev\'e IV system (\ref{2.2}) to matrix Painlev\'e II system (\ref{1.2}) reduces to the cofluence Calogero-Painlev\'e multi-particle systems from (\ref{CPIV}) to (\ref{CPII}), but breaks for the dual systems (\ref{DIV}) and (\ref{1.5}).
\end{theorem}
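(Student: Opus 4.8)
\noindent{\bf Proof plan.} The strategy is to transport the classical confluence $P_{\mathrm{IV}}\to P_{\mathrm{II}}$ [see \cite{OhOk06,Tak01}] through the Hamiltonian reduction in the two possible polarisations — diagonalising $\mathbf q$, respectively diagonalising $\mathbf p$ — and to show that it is compatible with the first and not with the second. First I would write the matrix confluence explicitly. It is a scaling limit governed by a small parameter $\varepsilon$, of the form
$$
\mathbf q^{(\mathrm{IV})}=\alpha_0(\varepsilon)\,\mathbf 1+\alpha_1(\varepsilon)\,\mathbf q^{(\mathrm{II})},\qquad
\mathbf p^{(\mathrm{IV})}=\beta_0(\varepsilon)\,\mathbf 1+\beta_1(\varepsilon)\,\mathbf q^{(\mathrm{II})}+\beta_2(\varepsilon)\,\mathbf p^{(\mathrm{II})},
$$
together with affine rescalings of $t,\theta_0,\theta_1$ and an overall scalar rescaling $\lambda(\varepsilon)$ of the Hamiltonian, letting $\varepsilon\to0$ at the end. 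I only use three structural features: (i) the substitution is affine in the matrix variables with scalar coefficients, hence commutes with the $GL(n)$--conjugation; (ii) constants drop out of commutators, so $[\mathbf p^{(\mathrm{IV})},\mathbf q^{(\mathrm{IV})}]=\alpha_1(\varepsilon)\beta_2(\varepsilon)\,[\mathbf p^{(\mathrm{II})},\mathbf q^{(\mathrm{II})}]$ and the level set $[\mathbf p,\mathbf q]=\sqrt{-1}g(\mathbf 1-v^{T}\otimes v)$ is preserved up to the rescaling $g\mapsto g'=\alpha_1\beta_2\,g$; and (iii) the map is a \emph{shear} — $\mathbf q^{(\mathrm{IV})}$ involves $\mathbf q^{(\mathrm{II})}$ only, while $\mathbf p^{(\mathrm{IV})}$ genuinely involves $\mathbf q^{(\mathrm{II})}$, i.e.\ $\beta_1(\varepsilon)\neq0$. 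Feature (iii) is forced: in \eqref{1.2} one has $\dot{\mathbf q}^{(\mathrm{II})}=\mathbf p^{(\mathrm{II})}$, whereas in \eqref{2.2} $\dot{\mathbf q}^{(\mathrm{IV})}$ is quadratic in $(\mathbf q^{(\mathrm{IV})},\mathbf p^{(\mathrm{IV})})$, and the term $\operatorname{Tr}(\mathbf p^2\mathbf q)$ of $H_{\mathrm{IV}}$ turns into the kinetic term $\operatorname{Tr}(\mathbf p^2/2)$ of $H_{\mathrm{II}}$ exactly because $\mathbf q^{(\mathrm{IV})}$ degenerates to a large constant $\alpha_0(\varepsilon)$; matching the flows then forces $\mathbf p^{(\mathrm{IV})}$ to absorb a $\mathbf q^{(\mathrm{II})}$--term.

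Next I would descend the confluence through the $\mathbf q$--diagonal reduction. By (iii), $\mathbf q^{(\mathrm{II})}$ is diagonal if and only if $\mathbf q^{(\mathrm{IV})}$ is, with the \emph{same} diagonalising matrix $C$; together with (i)--(ii) this shows that the substitution carries the reduced $P_{\mathrm{II}}$ data $(Q^{(\mathrm{II})},P^{(\mathrm{II})})$ into the reduced $P_{\mathrm{IV}}$ data, since resolving the moment map gives
$$
P^{(\mathrm{IV})}_{ij}=\bigl(\beta_0+\beta_1 q^{(\mathrm{II})}_i+\beta_2 p^{(\mathrm{II})}_i\bigr)\delta_{ij}+(1-\delta_{ij})\,\beta_2\,\frac{\sqrt{-1}g}{\,q^{(\mathrm{II})}_i-q^{(\mathrm{II})}_j\,},
$$
which is precisely the reduced momentum of $P_{\mathrm{IV}}$ at coupling $g'=\alpha_1\beta_2 g$. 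Plugging $q^{(\mathrm{IV})}_i=\alpha_0+\alpha_1 q^{(\mathrm{II})}_i$, $p^{(\mathrm{IV})}_i=\beta_0+\beta_1 q^{(\mathrm{II})}_i+\beta_2 p^{(\mathrm{II})}_i$ into \eqref{CPIV} and letting $\varepsilon\to0$ then reproduces \eqref{CPII}: the single--particle part is the scalar $P_{\mathrm{IV}}\to P_{\mathrm{II}}$ confluence performed particle by particle, and for the interaction one checks that after multiplication by $\lambda(\varepsilon)g'^2$
$$
\frac{q^{(\mathrm{IV})}_i+q^{(\mathrm{IV})}_j}{(q^{(\mathrm{IV})}_i-q^{(\mathrm{IV})}_j)^2}=\frac{2\alpha_0+\alpha_1\bigl(q^{(\mathrm{II})}_i+q^{(\mathrm{II})}_j\bigr)}{\alpha_1^2\,(q^{(\mathrm{II})}_i-q^{(\mathrm{II})}_j)^2}\ \longrightarrow\ \frac{1}{(q^{(\mathrm{II})}_i-q^{(\mathrm{II})}_j)^2}\,,
$$
the leading weight $2\alpha_0/\alpha_1^2$ being balanced by the Hamiltonian rescaling (one records from \cite{OhOk06,Tak01} that $\lambda\alpha_0\beta_2^2$ tends to a finite nonzero limit, while $\alpha_1/\alpha_0\to0$ kills the subleading piece). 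The reduction bookkeeping here is that of Lemma~2.5 of \cite{BertolaCafassoRubtsov}. This establishes the first half of the statement.

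For the dual reduction the obstruction is once more feature (iii). To reduce $P_{\mathrm{IV}}$ at $\mathbf p^{(\mathrm{IV})}$ diagonal one would need $\mathbf p^{(\mathrm{IV})}=\beta_0\mathbf 1+\beta_1\mathbf q^{(\mathrm{II})}+\beta_2\mathbf p^{(\mathrm{II})}$ diagonal; but on the dual $P_{\mathrm{II}}$ slice $\mathbf p^{(\mathrm{II})}=\Lambda^{(\mathrm{II})}$ is diagonal while $\mathbf q^{(\mathrm{II})}=\Phi^{(\mathrm{II})}=\operatorname{diag}(q^{(\mathrm{II})}_i)-\bigl(\tfrac{\sqrt{-1}g}{p^{(\mathrm{II})}_i-p^{(\mathrm{II})}_j}\bigr)_{i\neq j}$ is \emph{not}, so $\mathbf p^{(\mathrm{IV})}$ acquires the nonzero off--diagonal part $-\beta_1\bigl(\tfrac{\sqrt{-1}g}{p^{(\mathrm{II})}_i-p^{(\mathrm{II})}_j}\bigr)_{i\neq j}$. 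Hence the confluence carries the dual $P_{\mathrm{II}}$ slice \emph{off} the dual $P_{\mathrm{IV}}$ slice and does not descend to any map between the dual reduced phase spaces: the diagram breaks. An independent confirmation at the level of Hamiltonians: \eqref{DIV} carries only a two--body interaction, $\propto\frac{p_i+p_j}{(p_i-p_j)^2}$, whereas \eqref{1.5} also carries the two--body term $\frac{1}{(p_i-p_j)^4}$, the three--body term $\frac{1}{(p_i-p_j)^2(p_j-p_k)^2}$ and the four--body term $\frac{1}{(p_i-p_j)(p_j-p_k)(p_k-p_l)(p_l-p_i)}$; any confluence limit of \eqref{DIV} is induced by an affine scalar--coefficient substitution in the $q_i,p_i$ followed by an overall rescaling, and such an operation sends a two--body Hamiltonian to a two--body Hamiltonian and cannot manufacture genuine three-- or four--body interactions. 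Therefore no confluence limit carries \eqref{DIV} to \eqref{1.5}.

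I expect the negative half to be the delicate point. The positive half is the scalar confluence plus the reduction bookkeeping above; for ``breaks'' one must argue that \emph{no} confluence can be reconciled with the dual reduction, and the two arguments given — the slice mismatch from feature (iii), and the body--count obstruction separating \eqref{DIV} from \eqref{1.5} — together close that gap. Fixing the precise $\varepsilon$--exponents of $\alpha_i,\beta_i,\lambda$ (from \cite{OhOk06,Tak01}, or by matching \eqref{2.2} directly with \eqref{1.2}) is routine, but must be done carefully so that (ii) and the interaction limit come out with the right normalisations.
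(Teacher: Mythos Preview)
Your overall strategy coincides with the paper's: show that the confluence map is linear in $\mathbf q^{(\mathrm{II})}$ (so the $\mathbf q$--diagonal slices match and the interaction term degenerates correctly), and that the map is genuinely nonlinear on the $\mathbf p$--side (so the $\mathbf p$--diagonal slices do \emph{not} match). Your additional ``body--count'' observation---that a particle--wise substitution cannot turn the purely two--body interaction of \eqref{DIV} into the three-- and four--body terms of \eqref{1.5}---is a pleasant independent check that the paper does not give.

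There is, however, a concrete error in your ansatz. You posit an \emph{affine} confluence
\[
\mathbf p^{(\mathrm{IV})}=\beta_0\mathbf 1+\beta_1\mathbf q^{(\mathrm{II})}+\beta_2\mathbf p^{(\mathrm{II})},
\]
but no such map can take \eqref{2.2} to \eqref{1.2}: the Hamiltonian $H_{\mathrm{IV}}$ is cubic in $(\mathbf p,\mathbf q)$, an affine substitution keeps it cubic, whereas $H_{\mathrm{II}}$ contains $\operatorname{Tr}\mathbf q^4$. The actual confluence used in the paper is
\[
\mathbf q^{(\mathrm{IV})}=-\varepsilon^{-3}\Bigl(\tfrac12+\varepsilon^{2}\mathbf q^{(\mathrm{II})}\Bigr),\qquad
\mathbf p^{(\mathrm{IV})}=-\varepsilon\Bigl(\mathbf p^{(\mathrm{II})}+\bigl(\mathbf q^{(\mathrm{II})}\bigr)^{2}+t/2\Bigr),
\]
i.e.\ \emph{quadratic} in $\mathbf q^{(\mathrm{II})}$ on the momentum side. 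Once you use this form, your structural points survive verbatim: (ii) still holds because $[\,(\mathbf q^{(\mathrm{II})})^{2},\mathbf q^{(\mathrm{II})}\,]=0$, so the moment map is preserved (in fact with $g'=g$, no rescaling); the $\mathbf q$--diagonal slice is still carried to the $\mathbf q$--diagonal slice because $\mathbf q^{(\mathrm{IV})}$ depends on $\mathbf q^{(\mathrm{II})}$ alone; and the obstruction on the dual side is now that diagonality of $\mathbf p^{(\mathrm{IV})}$ forces diagonality of $\mathbf p^{(\mathrm{II})}+(\mathbf q^{(\mathrm{II})})^{2}$, not of $\mathbf p^{(\mathrm{II})}$. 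Your body--count argument is also unaffected, since a particle--wise (even polynomial) substitution still sends two--body sums to two--body sums. So the gap is localized to the explicit form of the confluence; the step you labelled ``routine'' would in fact have forced you to abandon the affine ansatz.
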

\begin{proof}

To provide degeneration procedure from matrix Painlev\'e IV systems to matrix Painlev\'e II system we exploit confluence formula which is nothing but the symplectomorhism combined with rescaling and shift of Hamiltonian and time $t$ which contains parameter $\varepsilon$. Taking the limit $\varepsilon\rightarrow 0$ we obtain resulting Hamiltonian. 

To obtain (\ref{1.2}) from (\ref{2.2}) we use the following confluence symplectomorphism
\begin{equation}\label{conf}
\begin{gathered}
\mathbf{q}^{(\operatorname{IV})}=-\frac{1}{\varepsilon^3}\left(\frac{1}{2}+\varepsilon^2\mathbf{q}^{(\operatorname{II})}\right),\quad \mathbf{p}^{(\operatorname{IV})}=-\varepsilon \left(\mathbf{p}^{(\operatorname{II})}+\left(\mathbf{q}^{(\operatorname{II})}\right)^2+t/2\right) \\
t^{(\operatorname{IV})}=\frac{1}{\varepsilon^3}\left(1-\varepsilon^4t^{(\operatorname{II})}\right),\quad \theta_0 = -\frac{1}{4\varepsilon^6},\quad \theta_1=-\theta, \quad H^{(\operatorname{II})}=-\varepsilon H^{(\operatorname{IV})} + \varepsilon^{-2}\frac{\theta}{2}
\end{gathered}
\end{equation}
We see that (\ref{conf}) is indeed symplectomorphism

$$
\omega^{(\operatorname{IV})} = \operatorname{Tr}\operatorname{d}\mathbf{p}^{(\operatorname{IV})}\wedge \operatorname{d}\mathbf{q}^{(\operatorname{IV})} = \operatorname{Tr}\left(\operatorname{d}\mathbf{p}^{(\operatorname{II})}\wedge \operatorname{d}\mathbf{q}^{(\operatorname{II})}+\operatorname{d}\left(\mathbf{q}^{(\operatorname{II})}\right)^2\wedge \operatorname{d}\mathbf{q}^{(\operatorname{II})}\right) =\operatorname{Tr}\operatorname{d}\mathbf{p}^{(\operatorname{II})}\wedge \operatorname{d}\mathbf{q}^{(\operatorname{II})} = \omega^{(\operatorname{II})}
$$
since
\begin{multline*}
\operatorname{Tr}\left(\operatorname{d} \mathbf{q}^2\wedge \operatorname{d}\mathbf{q}\right)=\sum\limits_{i,k,l}\left(\operatorname{d}(q_{ik}q_{kl})\wedge \operatorname{d}q_{li}\right)
   = \sum\limits_{i,k,l}\underbrace{q_{ik}\operatorname{d}q_{kl}\wedge \operatorname{d}q_{li}}_\textrm{$i=\alpha,k=\beta,l=j$}+\underbrace{q_{kl}\operatorname{d}q_{ik}\wedge \operatorname{d}q_{li}}_\textrm{$k=\alpha, l=\beta , i=j$} 
  = \\ = \sum\limits_{\alpha,\beta}q_{\alpha\beta}\sum\limits_j\left[\operatorname{d}q_{\beta j}\wedge \operatorname{d}q_{j\alpha} + \operatorname{d}q_{j\alpha}\wedge \operatorname{d}q_{\beta j} \right] =0.
\end{multline*}
Besides rescaling of time agrees with the rescaling of Hamiltonian, such that Hamiltonian equations don't change. Under transformation (\ref{conf}) Hamiltonian  (\ref{2.2}) maps to the
\begin{multline}
H^{(\operatorname{IV})} = \operatorname{Tr}\left( \mathbf{p}^{(\operatorname{IV})}\mathbf{q}^{(\operatorname{IV})} (\mathbf{p}^{(\operatorname{IV})}-\mathbf{q}^{(\operatorname{IV})}-t^{(\operatorname{IV})}) + \theta_0\mathbf{p}^{(\operatorname{IV})}-(\theta_0+\theta_1)\mathbf{q}^{(\operatorname{IV})} \right) = \\ =
\varepsilon^2\operatorname{Tr}\left(\left(\mathbf{p}^{(\operatorname{II})}\right)^2\mathbf{q}^{(\operatorname{II})}+(2\mathbf{p}^{(\operatorname{II})}+t^{(\operatorname{II})})\left(\mathbf{q}^{(\operatorname{II})}\right)^3+\left(\mathbf{q}^{(\operatorname{II})}\right)^5+\mathbf{p}^{(\operatorname{II})}\mathbf{q}^{(\operatorname{II})}t + \frac{t^{(\operatorname{II})}}{4}\mathbf{q}^{(\operatorname{II})}\right) + \\
+\operatorname{Tr}\left(\frac{\left(\mathbf{p}^{(\operatorname{II})}\right)^2}{2} - \frac{1}{2}\left(\left(\mathbf{q}^{(\operatorname{II})}\right)^2+\frac{t^{(\operatorname{II})}}{2}\right)^2-\theta \mathbf{q}^{(\operatorname{II})}\right)
\end{multline}
After taking the limit  $\varepsilon\rightarrow 0$ and putting down the superscripts we get
the Hamiltonian matrix Painlev\'e II system
$$
H = \operatorname{Tr}\left(\frac{\mathbf{p}^2}{2} - \frac{1}{2}\left(\mathbf{q}^2+\frac{t}{2}\right)^2-\theta \mathbf{q}\right).
$$
This transformation reduces to the multi-particle Calogero-Painlev\'e IV system
\begin{multline*}
H_{\operatorname{IV}}^{(\operatorname{red})} = \sum\limits_{i=1}^n\left[q^{(\operatorname{IV})}_i(p^{(\operatorname{IV})}_i)^2-p^{(\operatorname{IV})}_i(q^{(\operatorname{IV})}_i)^2-tp^{(\operatorname{IV})}_iq^{(\operatorname{IV})}_i+\theta_0q^{(\operatorname{IV})}_i-(\theta_0+\theta_1)p^{(\operatorname{IV})}_i\right] +\\+ g^2\sum\limits_{i<j}\frac{q^{(\operatorname{IV})}_i+q^{(\operatorname{IV})}_k}{\left(q^{(\operatorname{IV})}_i-q^{(\operatorname{IV})}_j\right)^2}
\end{multline*}
by the following way 
$$
q^{(\operatorname{IV})}_i=-\frac{1}{\varepsilon^3}\left(\frac{1}{2}+\varepsilon^2q^{(\operatorname{II})}_i\right),\quad p^{(\operatorname{IV})}_i=-\varepsilon \left(p^{(\operatorname{II})}_i+(q_i^{(\operatorname{II})})^2+\frac{t}{2}\right)
$$
and the same rescaling for the constants, time and Hamiltonian as in (\ref{conf}). Since the "diagonal" part of Calogero-Painlev\'e IV Hamiltonian is a sum of uncoupled Painlev\'e IV Hamiltonians, in the limit $\varepsilon\rightarrow 0$  transform will map them to the sum of uncoupled Painlev\'e Hamiltonians. For the interaction term we have
$$
-\varepsilon g^2\sum\limits_{i<j}\frac{q^{(\operatorname{IV})}_i+q^{(\operatorname{IV})}_j}{(q^{(\operatorname{IV})}_i-q^{(\operatorname{IV})}_j)^2}=\sum\limits_{i<j}\varepsilon^2g^2\frac{q^{(\operatorname{II})}_i+q^{(\operatorname{II})}_j}{\left(q^{(\operatorname{II})}_i-q^{(\operatorname{II})}_j\right)^2}+\frac{g^2}{\left(q^{(\operatorname{II})}_i-q^{(\operatorname{II})}_j\right)^2}\rightarrow \sum\limits_{i<j}\frac{g^2}{\left(q^{(\operatorname{II})}_i-q^{(\operatorname{II})}_j\right)^2}
$$
which transforms to the to the interaction term for the Calogero-Painlev\'e II system. However this reduction of the symplectomorphism (\ref{conf}) cannot be applied for the dual systems. Indeed this confluence on the Calogero-Painlev\'e side is a consequence of linearity of transformation (\ref{conf}) with respect to $\mathbf{q}$, we see that eigenspace of $\mathbf{q}^{(\operatorname{IV})}$ coincides with the eigenspace of $\mathbf{q}^{(\operatorname{II})}$, so in some sense we reduce from the same point of $GL$-orbit. On the other hand, on the dual side we have that dual reduction (diagonalization of $\mathbf{p}^{(\operatorname{IV})}$) for the matrix Painlev\'e IV system implies diagonalization of $\mathbf{p}^{(\operatorname{II})}+\left(\mathbf{q}^{(\operatorname{II})}\right)^2$ which doesn't coincide with the reduction at diagonal $\mathbf{p}^{(\operatorname{II})}$, so on the dual side we have that dual systems reduce from the different points of $GL$-orbit, which is an obstacle for this degeneration for the dual systems.
\end{proof}

\begin{remark}
Here we use the combination of the polynomial canonical transformation with confluence transformation given in \cite{OhOk06,Tak01} which is "linear" both for $\mathbf{q}$ and $\mathbf{p}$ and given by
\begin{equation}\label{conf1}
\begin{gathered}
\mathbf{q}^{(\operatorname{IV})}=-\frac{1}{\varepsilon^3}\left(\frac{1}{2}+\varepsilon^2\tilde{\mathbf{q}}^{(\operatorname{II})}\right),\quad \mathbf{p}^{(\operatorname{IV})}=-\varepsilon \tilde{\mathbf{p}}^{(\operatorname{II})}, \quad t^{(\operatorname{IV})}=\frac{1}{\varepsilon^3}\left(1-\varepsilon^4t^{(\operatorname{II})}\right)\\
\theta_0 = -\frac{1}{4\varepsilon^6},\quad \theta_1=-\theta, \quad \tilde{H}^{(\operatorname{II})}=-\varepsilon H^{(\operatorname{IV})} + \varepsilon^{-2}\frac{\theta}{2}
\end{gathered}
\end{equation}
and maps matrix Painlev\'e IV system to the following Hamiltonian
\begin{equation}\label{polP}
 \tilde{H}^{(\operatorname{II})} = \operatorname{Tr}\left(\frac{1}{2}\tilde{\mathbf{p}}(\tilde{\mathbf{p}}-2\tilde{\mathbf{q}}^2-t)-\theta\tilde{\mathbf{q}}\right)
\end{equation}
which Calogero-Painlev\'e reduction takes form
$$
\sum\limits_i \left[\frac{p_i^2}{2} - p_iq_i^2-\frac{t}{2}p_i-\theta p_i \right] + g^2\sum\limits_{i< j} \frac{1}{(q_i-q_j)^2}
$$
and the dual reduction is
$$
\sum\limits_i \left[\frac{p_i^2}{2} - p_iq_i^2-\frac{t}{2}p_i-\theta q_i \right] - g^2\sum\limits_{i\neq j} \frac{p_i}{(p_i-p_j)^2}.
$$
Since (\ref{conf1}) is linear in $\mathbf{q}$ and $\mathbf{p}$ this confluence holds also for Painlev\'e-Calogero and the dual systems. Hamiltonian system $\tilde{H}^{(\operatorname{II})}$ is canonicaly equivalent to the (\ref{1.2}) with the symplectic transformation
\begin{equation}\label{canTr}
\tilde{\mathbf{q}}=\mathbf{q},\quad \tilde{\mathbf{p}}=\mathbf{p}+\mathbf{q}^2+t/2
\end{equation}
which is obstacle for the degeneration of dual systems. The degeneration to matrix Painlev\'e I systems is given by "linear" maps in $\mathbf{q}$ and $\mathbf{p}$ from the $\tilde{H}^{(\operatorname{II})}$ system. To obtain confluence from $H^{(\operatorname{II})}$ given in (\ref{1.2}) we need to use inversion of (\ref{canTr}) combined with confluence which leaves transformation to be "linear" only in $\mathbf{q}$. So we have that degeneration to matrix Painlev\'e I system reduces to the Calogero-Painlev\'e systems but not for the dual systems.
\end{remark}

By fixing $t=\tau$ we get the autonomous version of the confluence which provides the degeneration of the rational Inozemtsev system to the rational system obtained by the reduction of autonomous matrix Painlev\'e II and I systems.

Obtained multi-particle systems are further degenerations of the Inozemtsev systems. Arising dual systems look new and need more detailed investigation. In case of the multi-particle Calogero-Painlev\'e II there is an interesting interpretation of this deautonomisation which arises from the reduction from matrix modified Korteweg-de Vries equation. In the next section we discuss this connection in details.

\section{Matrix modified Korteweg-de Vries equation and Painlev\'e-Calogero correspondence for Painlev\'e II}

The matrix modified Korteweg-De Vries equation (MmKdV) has the form \cite{KhaKhr}
\begin{equation}\label{4.1}
\mathbf{u}_t=\mathbf{u}_{xxx}+3[\mathbf{u}_x,\mathbf{u}]-6\mathbf{u}\mathbf{u}_x\mathbf{u}.
\end{equation}
In this section we are going to show the connection between travelling wave reduction of MmKdV and it's self-similar reduction. 
Self-similar reduction is the following change of variables
\begin{equation}
    \mathbf{u}(x,t)=\frac{\mathbf{v}(z)}{(3t)^{\frac{1}{3}}},\quad z = \frac{x}{(3t)^{\frac{1}{3}}}.
\end{equation}
This change of variables leads to
\begin{displaymath}
\partial_t = -\frac{x}{(3t)^{\frac{4}{3}}}\partial_z,\quad \partial_x = \frac{1}{(3t)^{\frac{1}{3}}}\partial_z
\end{displaymath}
and the equation \eqref{4.1} turns to
\begin{equation}\label{4.2}
    \mathbf{v}_{zzz} +3[\mathbf{v},\mathbf{v}_{zz}]-6\mathbf{v}\mathbf{v}_z\mathbf{v}+2(z\mathbf{v})_z=0.
\end{equation}
Let us change since that moment variable $z$ on $t$ in order to return to our previous notation. It can be shown that equation \eqref{4.2} is just a differential consequence of the matrix Painlev\'e II \cite{BalandinSokolov, GordPickZhu}
\begin{equation}\label{4.3}
    \left(\partial_t + 3 ad_{\mathbf{v}} + 2 ad_{\mathbf{v}}\circ \partial_t^{-1}\circ ad_{\mathbf{v}}\right)\circ \left(\mathbf{v}_{tt} + 2\mathbf{v}^3+t\mathbf{v}+\theta\right) = 0
\end{equation}
where $ad_{\mathbf{v}}\circ A = [\mathbf{v}, A]$. Equation \eqref{4.3} is matrix Painlev\'e II. On the other hand, the travelling wave reduction of the MmKdV equation
\begin{equation}
    \mathbf{u}(x,t) = \mathbf{v}(z),\quad z = x+\omega t
\end{equation}
has the form
\begin{equation}\label{4.3}
    \left(\partial_z + 3 ad_{\mathbf{v}} + 2 ad_{\mathbf{v}}\circ \partial_z^{-1}\circ ad_{\mathbf{v}}\right)\circ \left(\mathbf{v}_{zz} + 2\mathbf{v}^3+\omega\mathbf{v}+\theta\right) = 0
\end{equation}
where $\omega$ is an arbitrary constant. Let us change $z$ to $t$ and $\omega$ to $\tau$ in order to show the connection with equation \eqref{4.3}. The equation turns to
\begin{equation}
    \mathbf{v}_{tt} + 2\mathbf{v}^3+\tau\mathbf{v}+\theta=0
\end{equation}
and can be written as the following Hamiltonian system
\begin{equation}
      \left\{\begin{array}{l}
           \dot{\mathbf{q}} = \mathbf{p} \\
           \dot{\mathbf{p}} = 2\mathbf{q}^3 + \tau\mathbf{q} + \theta
         \end{array}\right.,\quad H_{\operatorname{PII}} (\mathbf{q},\mathbf{p}, \tau) = \operatorname{Tr}\left(\frac{\mathbf{p}^2}{2} - \frac{1}{2}\left(\mathbf{q}^2+\frac{\tau}{2}\right)^2-\theta \mathbf{q}\right)
\end{equation}
As a consequence, we have that matrix Painlev\'e II (which is self-similar reduction of MmKdV) is a $\tau$-deformation of the travelling wave reduction of MmKdV. This Painlev\'e-Calogero correspondence can be lifted down to the multi-particle systems which are obtained by Hamiltonian reduction, and finally we have the following diagram

\begin{center}
\begin{tikzpicture}[baseline= (a).base]
\node[scale=0.83] (a) at (0,0){
\begin{tikzcd}
 H_{\operatorname{II}}(p_i, q_i, t) \arrow[dd, "\begin{array}{c}
      \text{Spectral}\\
      \text{duality}
 \end{array}"'] \arrow[rrrrrr, dotted, "\text{Painlev\'e-Calogero}"] & & & & &  & H_{\operatorname{II}}(p_i, q_i, \tau)   \arrow[dd,"\begin{array}{c}
      \text{Spectral}\\
      \text{duality}
 \end{array}"] \arrow[llllll, dotted, "\text{correspondence}"]  \\
 & H_{\operatorname{II}}(\mathbf{p},\mathbf{q}, t) \arrow[lu, "\mathbf{q} \,-\, \operatorname{diag}"'] \arrow[ld,"\mathbf{p}\,-\,\operatorname{diag}"] & & \operatorname{MmKdV}  \arrow[ll,"\operatorname{self-similar}"', "\operatorname{reduction}"] \arrow[rr, "\text{travelling wave}","\operatorname{reduction}"'] & &  H_{\operatorname{II}}(\bf{p},\bf{q}, \tau)  \arrow[rd, "\mathbf{p} \,-\, \operatorname{diag}"'] \arrow[ur, "\mathbf{q} \,-\, \operatorname{diag}"] & \\
 H^{(\operatorname{dual})}_{\operatorname{II}}(p_i, q_i, t)   \arrow[rrrrrr, dotted, "\text{Painlev\'e-Calogero}"] \arrow[uu] & & & & & &  H^{(\operatorname{dual})}_{\operatorname{II}}(p_i, q_i, \tau)  \arrow[llllll, dotted, "\text{correspondence}"] \arrow[uu]\\
\end{tikzcd}};
\end{tikzpicture}
\end{center}

The following situation is the result that powers of $t$ in self-similar variable and the multiplier of $\mathbf{u}$ coincides, so in a sense this phenomena is a result of existence of special symmetries of integrable PDE. 
\begin{theorem}
For all integrable evolution PDE's which allows the self-similar variables of the form
$$
u(x,t) = \frac{v(z)}{t^\beta}, \quad z=\frac{x}{t^\beta}
$$
and also allows travelling wave reduction
$$
u=w(z),\quad z=x-\omega  t
$$
self-similar reduction is $\omega$-deformation of travelling wave reduction
\end{theorem}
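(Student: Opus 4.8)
The plan is to push the evolution equation through both substitutions with the chain rule and check that the two resulting ordinary differential equations coincide once the constant $\omega$ is turned into the independent variable $z$. First I would fix the set-up. Write the PDE as $u_t=F[u]$ with $F$ a local (matrix) differential polynomial in $u,u_x,u_{xx},\dots$ having no explicit $x$ or $t$; autonomy is exactly what makes the travelling-wave ansatz legitimate. "Admitting the self-similar variable $u=t^{-\beta}v(z)$, $z=xt^{-\beta}$" I would read as: $u_t=F[u]$ is invariant under the dilation $(x,t,u)\mapsto(\lambda x,\lambda^{1/\beta}t,\lambda^{-1}u)$. Equivalently, with the weights $[x]=1$, $[t]=1/\beta$, $[u]=-1$, $[\partial_x]=-1$, $[\partial_t]=-1/\beta$, every monomial of $F$ is quasi-homogeneous of the single weight $[u_t]=-1-1/\beta$; concretely, a monomial $u^{a_0}(u_x)^{a_1}(u_{xx})^{a_2}\cdots$ satisfies $\sum_k(k+1)a_k=1+1/\beta$. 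Integrability enters only insofar as it guarantees the equation is (generically) a conservation law $u_t=\partial_x G[u]$ and carries the symmetries in play; the real content of the hypotheses is this scaling plus autonomy.

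For the travelling-wave reduction, substitute $u=w(z)$, $z=x-\omega t$: then $\partial_x\mapsto\partial_z$ and $\partial_t\mapsto-\omega\partial_z$, so the PDE becomes the ODE $\tilde F[w]+\omega w_z=0$, where $\tilde F$ denotes $F$ with $\partial_x$ replaced by $\partial_z$. In conservation-law form, one integration in $z$ gives $\tilde G[w]+\omega w+c=0$ for an integration constant $c$, so $\omega$ appears precisely as the coefficient of the linear term $w$. Note that $\omega$ enters in exactly one place, through $\partial_t\mapsto-\omega\partial_z$ acting on the single term $u_t$.

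For the self-similar reduction, substitute $u=t^{-\beta}v(z)$, $z=xt^{-\beta}$; the chain rule gives $\partial_x^{\,k}u=t^{-(k+1)\beta}v^{(k)}(z)$ and, after a short computation, $u_t=-\beta\,t^{-\beta-1}(zv)_z$. The key point is that quasi-homogeneity makes every monomial of $F[u]$ pick up exactly the factor $t^{-\beta\sum_k(k+1)a_k}=t^{-\beta(1+1/\beta)}=t^{-\beta-1}$, so $F[u]=t^{-\beta-1}\tilde F[v](z)$ with the \emph{same} reduced expression $\tilde F$ as before; the cancellation of $t^{-\beta-1}$ against the factor produced by $u_t$ is precisely the statement that the self-similar ansatz closes to an ODE in $z$, which is forced by the hypothesis. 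Dividing out $t^{-\beta-1}$ leaves $\tilde F[v]+\beta(zv)_z=0$, and in conserved form one integration gives $\tilde G[v]+\beta z\,v+c'=0$; absorbing $\beta$ into a rescaling of $z$, this reads $\tilde G[v]+z\,v+c'=0$.

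Comparing $\tilde G[w]+\omega w+c=0$ with $\tilde G[v]+z\,v+c'=0$, the two equations differ only by the replacement of the constant $\omega$ by the independent variable $z$ (and a relabelling of the integration constant): the self-similar reduction is the $\omega$-deformation of the travelling-wave reduction, and conversely "freezing" the explicit occurrence of $z$ to a constant recovers the travelling-wave ODE. For the MmKdV / matrix Painlev\'e~II pair this is exactly the passage between the autonomous equation $v_{zz}+2v^3+\tau v+\theta=0$ and the non-autonomous one $v_{zz}+2v^3+z v+\theta=0$. I expect the main obstacle to be the middle step: making the hypothesis precise enough that the factorization $F[u]=t^{-\beta-1}\tilde F[v]$ holds on the nose. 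This is immediate for a genuine differential polynomial with the stated weights, but for the nonlocal operators ($\partial_t^{-1}$, $ad_v$, etc.) that occur in integrable hierarchies the weight count must be carried out term by term, and one must first bring the equation to conservation-law (or once-integrated) form so that the correspondence $\omega\mapsto z$ is literal rather than holding only modulo the extra lower-order $\beta v$ term produced by the Leibniz rule in $(zv)_z$.
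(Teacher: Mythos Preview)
Your argument is correct and follows essentially the same route as the paper: write the equation in conservation-law form, use the scaling homogeneity to show that the self-similar ansatz produces an ODE with the same nonlinear part $\tilde F$ (your $\tilde G$) as the travelling-wave ODE, integrate once in $z$, and observe that the two integrated equations differ only by the replacement of the constant $\omega$ with the variable $z$. Your version is more explicit about the weight bookkeeping and about why the conservation-law structure is needed for the once-integrated comparison, but the logic and the key computation $u_t=-\beta t^{-\beta-1}(zv)_z$ are identical to the paper's.
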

\begin{proof}
By integrable evolution PDE we mean the following equation
$$
u_t + \partial_x F(x, u, u_x, u_{xx},...) = 0.
$$
The existence of the travelling wave solution, means that equation is invariant under translations, so we have that $F$ doesn't depend on $x$. Since $F$ depends only on $u$ and it's derivatives with respect to $x$, the existence of the self similar variables
$$
u = \frac{v(z)}{t^{\beta}} ,\quad z = \frac{x}{t^{\alpha}}
$$
means that $F$ transforms as homogeneous function and goes to the $\frac{F(v,v_z,...)}{t^{\alpha+1}}$. The reduction takes form
$$
\beta(v+\frac{\alpha}{\beta}zv_z)-\partial_z F(v,v_z,...) = 0
$$
which in case of $\alpha=\beta$ may be integrated to
$$
F(v,v_z,v_{zz}...) - \beta z v = C,
$$
where $C$ is a constant of integration. On the other hand travelling wave reduction $u=w(z=x-\omega  t)$ takes form
$$
F(v,v_z,v_{zz}...) - \beta \omega  v = 0
$$
and we see that self-similar reduction is deformation of travelling wave reduction with respect to $\omega$.
\end{proof}

\section{Further remarks and open questions}
This paper is a small first step in the study of dualities for non-autonomous many-particle systems. Our natural examples and computations put a natural question  of {\it Liouville  and quantum integrability} for obtained dual rational many-particle systems. Some of them are integrable {\it ad hoc}, but integrability  of other examples is not clear at all.

We should remark that in all our examples of the Hamiltonian reduction procedure for Calogero-Painlev\'e Hamiltonian systems we always use for their phase space the
cotangent bundle of the Lie algebrs ${\mathfrak gl}_n$ and the momentum mapping given by the matrix commutator $\mu = [P,Q].$ In the same time the Hamiltonian reductions for their Calogero-Moser prototypes can be described with various generalizations of this phase space and the  momentum map: group-like half-commutator  $\mu = PQP^{-1} - Q$ on $T^*GL_n$ or "full"group-commutator $\mu =  PQP^{-1}Q^{-1}$ on the Heisenberg double $G\times G$. The existence of isomonodromic systems on  on $T^*GL_n$ and Heisenberg double $G\times G$ is open problem, which leads to the following natural question: how can we extend our dualities to the trigonometric and elliptic Calogero-Painlev\'e Hamiltonians? The Calogero-Moser-Ruijsenaars-Schneider duality table dictates a necessity of existence for some "Ruijsenaars-Painlev\'e correspondence" where on the Painlev\'e side one should expect an appearance of some discrete or $q-$Painlev\'e systems.

A strong evidence to the existence of this speculative extension is based on the above discussed one-to-one correspondence between so-called $BC_n$ Inozemtsev 
model with the Hamiltonian $H_{BC_n}$

\begin{equation*}
 H_{BC_{n}} : \quad\sum_{j=1}^{n}\left(\frac{p_{j}^{2}}{2} - \sum_{\ell=0}^{3} \kappa_{\ell}(\kappa_{\ell} +1) \wp\left(q_{j}+\omega_{\ell}\right)\right) - \frac{\kappa(\kappa +1)}{2} \sum_{j \neq k}\left(\wp\left(q_{j}-q_{k}\right)+\wp\left(q_{j}+q_{k}\right)\right)
\end{equation*}

It is known (see e.g. \cite{Takem1}) that  the $BC_n$ Inozemtsev model is a universal completely integrable model of quantum mechanics and a correspondence between the $BC_n$ Ruijsenaars-van Diejen systems  and the $BC_n$ Inozemtsev  was established. One should take into account the recent result of the same author \cite{Takem2}, who has studied an analogue of the well-known  relationship between Painlev\'e VI and Heun differential systems for the case of  difference equations. He has proposed, in particular, a correspondence between elliptic difference Painlev\'e equations and the one-variable Ruijsenaars-van Diejen difference equation, regarded  as a difference analogue of the Heun equation. He has proved also that  (degenerated) Rujisenaars-van Diejen operators of one variable are special cases of linear $q-$difference equations related to certain q-Painlev\'e VI equations by a connection preserving deformation.

It would be extremely interesting to find an appropriate version of Hamiltonian reduction and to find an analogue of the Ruijsenaars duality in the framework of this conjectural "Ruijsenaars-Painlev\'e correspondence",  which should include, a correspondence between difference systems like elliptic Ruijsenaars, dual to elliptic Calogero-Moser and fabulous double-elliptic (DELL) and elliptic Rains Painlev\'e system and its various degenerations. The first interesting results in this direction were very recently obtained by
Noumi, Ruijsenaars and Yamada \cite{NRY2019} where they have shown that the 8-parameter elliptic Sakai difference Painlev\'e equation can be presented in a Lax-like form which can be specified as the non-autonomous gives Schr\"odinger equation equation for the $BC_1$ 8-parameter Ruijsenaars-van Diejen difference Hamiltonian.
 
Another intriguing and challenging problem concerns a transition of the described dualities to other close domains. We suppose that there are some interesting links between our classical dualities and their quantum counterparts, which we did not touch in this paper. In particular, we expect the existence of a close connection between quantum version of our dualities and dualities described by Koornwinder and Mazzocco \cite{KM} in the $q-$Askey scheme and the degenerate DAHA. We are going to clarify it and to fill some gaps in one of the tables from \cite{CMR2019}. In this direction it would be highly ambitious and interesting to understand the conjectural correspondence in the case trigonometric degeneration of the difference Ruijsenaars-Shneider systems and its duality with $q-$ Knizhnik-Zamolodchikov equations or, more generally, in the framework of $q-$ Langlands correspondence (see, e.g. \cite{KSZ2018}).

We close a survey of possible applications of our computational observations with more general open question related to the end of Section 5. This is the condition for symmetries of PDE which lead to the situation when one reduction is deformation of another. Besides for integrable PDE's lifting of this symmetries to the Lax pairs is also an essential issue in this problem. We also believe that these conditions for the self-similar reduction may be lifted to the non-commutative case and more general classes of equations. A part of these questions (for the Sawada-Kotera hierarchy) will be addressed in PhD Thesis of Irina Bobrova  (National Research University HSE) (see \cite{IB19}).

\begin{appendices}
\section{Calculation of interaction terms}
In case of multi-particle dual Painlev\'e systems we obtain Hamiltonians which contain terms with more than two particle interactions. It comes from the $Q^3$ and $Q^4$ terms in matrix Hamiltonians, where $Q$ is the rational Calogero Lax operator
$$
Q = \delta_{ij} q_{i} + (1-\delta_{ij})\frac{\sqrt{-1}g}{p_i-p_j}.
$$

In case of fixed size of matrix $Q$ (the number of particles) calculation of traces of any power of $Q$ is a straightforward problem, which may be solved for example with help of computer algebra systems. In case of arbitrary size of $Q$, this computation is also not a big problem, but the process may be tedious. 

Here we present a simple approach for a calculation of such interaction terms. Here we denote by $n$ a number of particles (the size of $Q$) and by $g$ a coupling constant. Since $Q$ is linear matrix function of $g$, the trace of $Q^l$ is a polynomial in $g$ of degree less than $l$ 
\begin{equation}\label{a1}
\operatorname{Tr}Q^{l} = \sum\limits_{k=0}^l \frac{g^k}{k!}F_k,\quad F_k = \frac{d^k}{d g^k}\operatorname{Tr}(Q^{l}) =  \operatorname{Tr}\left(\frac{d^k}{d g^k}Q^{l}\Big\vert_{g=0}\right).
\end{equation}
To compute coefficients we use the following technical lemma
\begin{lemma}
The polynomial $\operatorname{Tr}Q^{l}$ is even for all $l$, i.e.
\begin{equation}
\operatorname{Tr}Q^{l} = \sum\limits_{k=0}^{\left[\frac{l}{2}\right]} \frac{g^{2k}}{2k!}F_{2k}
\end{equation}
where $[\frac{l}{2}]$ equals to $\frac{l}{2}-\frac{1}{2}$ in case when $l$ is odd and $\frac{l}{2}$ when $l$ is even.
\end{lemma}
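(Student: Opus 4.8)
The plan is to deduce the evenness of $\operatorname{Tr}Q^{l}$ from a transposition symmetry of the matrix $Q$. First I would write $Q=D+g\,R$, where $D=\operatorname{diag}(q_{1},\dots,q_{n})$ is independent of $g$ and $R$ is the matrix with zero diagonal and off-diagonal entries $R_{ij}=\sqrt{-1}/(p_{i}-p_{j})$ for $i\neq j$. The key point is that $R$ is antisymmetric, $R^{T}=-R$ (indeed $R_{ji}=\sqrt{-1}/(p_{j}-p_{i})=-R_{ij}$), while $D$, being diagonal, is symmetric. Hence replacing $g$ by $-g$ amounts to transposing $Q$:
\[
Q\big|_{g\mapsto -g}=D-gR=D+gR^{T}=(D+gR)^{T}=Q^{T}.
\]

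Next I would invoke the elementary invariance of the trace of a power under transposition, $\operatorname{Tr}\big((Q^{T})^{l}\big)=\operatorname{Tr}\big((Q^{l})^{T}\big)=\operatorname{Tr}(Q^{l})$. Combining this with the previous display gives
\[
\operatorname{Tr}(Q^{l})\big|_{g\mapsto -g}=\operatorname{Tr}\big((Q^{T})^{l}\big)=\operatorname{Tr}(Q^{l}),
\]
so the polynomial $\operatorname{Tr}(Q^{l})$ in $g$ is invariant under $g\mapsto -g$ and therefore contains only even powers of $g$. Substituting this back into \eqref{a1} forces every odd coefficient $F_{1},F_{3},\dots$ to vanish, leaving exactly the terms with $2k\le l$, which is the asserted expansion (with the convention for $[l/2]$ stated in the lemma).

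I do not anticipate a real obstacle: the entire argument rests on the single observation that the off-diagonal part of $Q$ is antisymmetric, so that $g\mapsto -g$ acts by transposition, together with the standard transpose-invariance of $\operatorname{Tr}$. The only place that needs care is the bookkeeping of the constant $\sqrt{-1}$ and the sign of $p_{i}-p_{j}$ when checking $R^{T}=-R$; the diagonal part $D$ plays no role beyond being symmetric.
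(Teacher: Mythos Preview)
Your argument is correct, and it rests on the same symmetry the paper exploits, namely that the diagonal part $D$ is symmetric while the off-diagonal part (your $R$, the paper's $A$) is antisymmetric. The paper, however, does not package this as the single identity $Q|_{g\mapsto -g}=Q^{T}$. Instead it expands $F_{2r-1}$ as a sum of monomials $E_{i_1,\dots,i_{2r-1}}$ in $D$ and $A$ containing an odd number of antisymmetric factors, and then argues that transposition sends each such monomial to minus its ``mirror'' monomial, so the traces cancel in pairs (or a self-mirror monomial is itself skew and has zero trace). Your global formulation is shorter and avoids the combinatorial bookkeeping of matching mirror terms; the paper's term-by-term version makes the cancellation mechanism explicit but is otherwise equivalent. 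Both proofs hinge on exactly the same two facts: $D^{T}=D$, $A^{T}=-A$, and the transpose-invariance of the trace.
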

\begin{proof}
In further calculations we will use the following notation
\begin{equation}
    Q\Big\vert_{g=0} = \delta_{ij} q_{i} = D,\quad \frac{dQ}{dg}= (1-\delta_{ij})\frac{\sqrt{-1}}{p_i-p_j} = A.
\end{equation}
Each odd coefficient $F_{2r-1}$ of polynomial (\ref{a1}) takes form
\begin{equation}\label{a2}
F_{2r-1} = \sum\limits_{0<i_1<\dots<i_{2r-1}\leq l}\operatorname{Tr}E_{i_1,i_2\dots i_{2r-1}}
\end{equation}
where $E$ is given by
$$
E_{i_1,i_2\dots i_{2r-1}}=D^{i_1-1}AD^{i_2-i_1-1}A\dots D^{i_k-i_{k-1}-1}A^{i_k}\dots  D^{i_{2r-1}-i_{2r-2}-1}AD^{l-i_{2r-1}}
$$
Since $A$ is symmetric matrix, $D$ is skew-symmetric and all $E$'s contain the odd number of $D$'s we have
$$
(E_{i_1,i_2,\dots i_{2r-1}})^{\operatorname{T}} = - E_{l-i_{2r-1},i_{2r-2},\dots i_{2r-i_1}}
$$
On the other hand, trace is invariant under transposition, so there are two cases. The first one is
$$
E_{i_1,i_2,\dots i_{2r-1}}\neq E_{l-i_{2r-1},i_{2r-2},\dots i_{2r-i_1}}
$$
so for each sequence $(i_1,i_2,\dots i_{2r-1})$ in sum (\ref{a2}) we have the "mirror" one $(l-i_{2r-1},l-i_{2r-2},$ $\dots, l-i_{2r-i_1})$, such that
$$
\operatorname{Tr}E_{i_1,i_2,\dots i_{2r-1}} + \operatorname{Tr} E_{l-i_{2r-1},i_{2r-2},\dots i_{2r-i_1}} = 0
$$
so they don't contribute to (\ref{a1}). The second case is
$$
E_{i_1,i_2,\dots i_{2r-1}} = E_{l-i_{2r-1},i_{2r-2},\dots i_{2r-i_1}}.
$$
In that case $E_{i_1,i_2,\dots i_{2r-1}}$ is a skew-symmetric matrix, so it doesn't contribute in (\ref{a2}).
\end{proof}
Finally we provide calculations of traces for $Q^3$ and $Q^4$. In case of $l=3$ we have
$$
\operatorname{Tr}Q^{3} = \operatorname{Tr}(D^3) + 3 g^2 \operatorname{Tr}(AAD) = \sum\limits_{i=1}^n q_i^3 + 3g^2\sum\limits_{i\neq j}\frac{q_i}{(p_i-p_j)^2} = \sum\limits_{i=1}^n q_i^3 + 3g^2\sum\limits_{i < j}\frac{q_i+q_j}{(p_i-p_j)^2}
$$
For the $l=4$ the expansion takes form
$$
\operatorname{Tr}Q^{4} =  \operatorname{Tr}(D^4) + 2g^2\left[2\operatorname{Tr}(D^2A^2)+\operatorname{Tr}(DADA)\right] + g^4\operatorname{Tr}(A^4)
$$
Here we have
$$
\operatorname{Tr}(D^2A^2) = \sum\limits_{i\neq j} \frac{q_i^2}{(p_i-p_j)^2} =  \sum\limits_{i<j} \frac{q_i^2+q_j^2}{(p_i-p_j)^2}, \quad \operatorname{Tr}(DADA) = \sum\limits_{i\neq j} \frac{q_iq_j}{(p_i-p_j)^2} =  2\sum\limits_{i<j} \frac{q_iq_j}{(p_i-p_j)^2} 
$$
The last $g^4$-term takes form
$$
\operatorname{Tr}(A^4) = \sum\limits_{i\neq j \neq k\neq l\neq i}\frac{1}{(p_i-p_j)(p_j-p_k)(p_k-p_l)(p_l-p_i)}
$$
and we have 3 opportunities ($k=i,l=j$), ($l=j$) and all indices $i,j,k,l$ are different, which leads to the following interaction term 
$$
\operatorname{Tr}(A^4)= \sum\limits_{i<j}\frac{2}{(p_i-p_j)^4}+\sum\limits_{i< j<k}\frac{4}{(p_i-p_j)^2(p_j-p_k)^2} + \sum\limits_{i< j<k<l}\frac{8}{(p_i-p_j)(p_j-p_k)(p_k-p_l)(p_l-p_i)}\,.
$$

\end{appendices}

\end{document}